\tikzset{
  textnode/.style={},
  varnode2/.style={draw, outer sep=1pt},
  varnode2green/.style={draw,fill=green},
  varnode2red/.style={draw,fill=red},
  varnode2blue/.style={draw,fill=blue},
  varnode/.style={rectangle},
  varnodegreen/.style={rectangle,draw,fill=green},
  varnodered/.style={rectangle,draw,fill=red},
  varnodeblue/.style={rectangle,draw,fill=blue},
  varnodepurple/.style={rectangle,draw,fill=purple},
  varnodeyellow/.style={rectangle,draw,fill=yellow},
  varnodeempty/.style={inner sep=0pt,fill},
  line/.style={=stealth,thick,fill=red},
  matchededge/.style={>=stealth,thick,fill=black},
  unmatchededge/.style={>=stealth,thick,fill=blue},
  arrow/.style={=>stealth,thick}
}
\newtheorem{mylem}{Lemma}
\newtheorem{mythm}{Theorem}
\newtheorem{myprop}{Proposition}
\lstdefinestyle{inline}{%
    basicstyle=\ttfamily\small%
}
\lstdefinestyle{inline}{%
    basicstyle=\ttfamily\small%
}
\lstdefinelanguage{isabelle}{
    morekeywords={record,type_synonym,definition,fun,function,primrec,where,lemma,theorem,unfolding,by,shows,assumes,and,datatype,using,abbreviation
,moreover,have,hence,thus,qed,proof,let,ultimately,show,next,in}
    , sensitive=true
    , showstringspaces=false
    , framerule=0pt
    , xleftmargin=2em
    , numbers=left
    , numberstyle=\ttfamily\tiny
    , firstnumber=1
    , stepnumber=2
    , basicstyle=\sffamily\scriptsize
    ,backgroundcolor = \color{white}
    , breaklines=true
    , showspaces=false
    , morecomment=[l]{--}
    , morecomment=[s]{(*}{*)}
    , commentstyle=\color{gray}
    , morestring=[b]"
    , literate={\\<times>}{{$\times$}}{1} {\\<equiv>}{{$\equiv$}}{1} {\\<forall>}{{$\forall$}}{1} {\\<exists>}{{$\exists$}}{1} {\\<and>}{{$\land$}}{1}
        {\\<in>}{{$\in$}}{1} {\\<Rightarrow>}{{$\Rightarrow$}}{1} {\\<lambda>}{{$\lambda$}}{1} {::}{{$::$}}{1}
        {\\<subseteq>}{{$\subseteq$}}{1} {\\<^sub>m}{{$_m$}}{1} {\\<longleftrightarrow>}{{$\longleftrightarrow$}}{3}
        {\\<pi>}{{$\pi$}}{1} {\\<delta>}{{$\delta$}}{1} {\\<lbrakk>}{{$\llbracket$}}{1} {\\<rbrakk>}{{$\rrbracket$}}{1}
        {\\<Longrightarrow>}{{$\Longrightarrow$}}{3} {\\<not>}{{$\lnot$}}{1} {\\<le>}{{$\le$}}{1} {\\<rightharpoonup>}{{$\rightharpoonup$}}{2}
        {\\<^sub>\\<V>}{{$_{\mathcal V}$}}{1} {\\<lparr>}{{$\llparenthesis$}}{1} {\\<rparr>}{{$\rrparenthesis$}}{1}
        {\\<leftarrow>}{{$\leftarrow$}}{1} {\\<^sub>\\<O>}{{$_{\mathcal O}$}}{1} {\\<^sub>I}{{$_{\texttt{I}}$}}{1}
        {\\<^sub>G}{{$_{\texttt{G}}$}}{2} {\\<phi>}{{$\varphi$}}{1} {\\<Phi>}{{$\Phi$}}{1} {\\<psi>}{{$\psi$}}{1} {\\<Psi>}{{$\Psi$}}{1}
        {\\<^sub>S}{{$_{\texttt S}$}}{1} {\\<inverse>}{{$^{-1}$}}{1} {\\<^sub>O}{{$_{\texttt O}$}}{1} {\\<^bold>\\<And>}{{$\bm\bigwedge$}}{1}
        {\\<^bold>\\<or>}{{$\bm\lor$}}{1} {\\<^sub>G}{{$_{\texttt G}$}}{1} {\\<Pi>}{{$\Pi$}}{1} {\\<^sub>I}{{$_{\texttt I}$}}{1} {\\<noteq>}{{$\neq$}}{1}
        {\\<bottom>}{{$\bot$}}{1} {\\<^sub>+}{{$_\texttt +$}}{1} {\\<^bold>\\<and>}{{$\bm\land$}}{1} {\\<^bold>\\<not>}{{$\bm\lnot$}}{1}
        {\\<^sub>1}{{$_1$}}{1} {\\<^sub>2}{{$_2$}}{1} {\\<A>}{{$\mathcal A$}}{1} {\\<Turnstile>}{{$\models$}}{2} {\\<^sub>\\<forall>}{{$_\forall$}}{1}
        {\\<^sub>0}{{$_0$}}{1} {\\<tau>}{{$\tau$}}{1}  {\\<^sub>\\<Omega>}{{$_\Omega$}}{1} {\\<^sub>V}{{$_V$}}{1} {\\<^bold>\\<Or>}{{$\bm\bigvee$}}{1}
        {\\<^sub>P}{{$_\texttt P$}}{1} {\\<^sub>X}{{$_\texttt X$}}{1} {\\<longrightarrow>}{{$\longrightarrow$}}{2} {\\<or>}{{$\lor$}}{1} {\\<^sub>\\<pi>}{{$_\pi$}}{1}
        {\\<^sub>s}{{$_s$}}{1} {\\<^sub>t}{{$_t$}}{1} {\\<^sub>a}{{$_a$}}{1} {\\<^sub>r}{{$_r$}}{1} {\\<^sub>t}{{$_t$}}{1} {\\<^sub>e}{{$_e$}}{1} {\\<^sub>n}{{$_n$}}{1} {\\<^sub>d}{{$_d$}}{1} {\\<^sub>i}{{$_i$}}{1} {\\<^sub>v}{{$_v$}}{1} {\\<^sub>j}{{$_j$}}{1} {\\<^sub>b}{{$_b$}}{1} {\\<inter>}{{$\cap$}}{1} {\\<union>}{{$\cup$}}{1} {\\<Union>}{{$\bigcup$}}{1} {\\<^sup>c\\<TTurnstile>\\<^sub>=}{{${}^c\models_=$}}{1}
        {\\<open>}{{<}}{1} {\\<close>}{{>}}{1} {\\<langle>}{{$\langle$}}{1} {\\<rangle>}{{$\rangle$}}{1} {\\<ge>}{{$\ge$}}{1} {\\<^sup>-\\<^sup>1\\<^sub>C}{{$^{\texttt{-1}}_\texttt{C}$}}{2} {\\<^sup>+\\<^sub>C}{{$^\texttt{+}_\texttt{C}$}}{1} {\\<circ>\\<^sub>C}{{$\circ^\texttt C$}}{1} {\\<top>\\<^sub>C}{{$\top_\texttt{C}$}}{2} {\\<bottom>\\<^sub>C}{{$\bot_\texttt{C}$}}{2} {\\<not>\\<^sub>C}{{$\neg_\texttt{C}$}}{2}
        {\\<squnion>\\<^sub>C}{{$\sqcup_\texttt{C}$}}{2} {\\<sqinter>\\<^sub>C}{{$\sqcap_\texttt{C}$}}{2} {\\<exists>\\<^sub>C}{{$\exists_\texttt{C}$}}{2}  {\\<forall>\\<^sub>C}{{$\forall_\texttt{C}$}}{2} {=\\<^sub>C}{{$=_\texttt{C}$}}{2} {\\<sigma>}{{$\sigma$}}{2} {\\<notin>}{{$\notin$}}{1} {\\<oplus>}{{$\oplus$}}{1} {\\<nexists>}{{$\nexists$}}{1} {\\<setminus>}{{$\setminus$}}{1} {_}{{$-$}}{1}
}
\lstdefinelanguage{pddl}{
  sensitive=false,    
  morecomment=[l]{;}, 
  alsoletter={:,-},   
  morekeywords={
    define,domain,problem,not,and,or,when,forall,exists,either,
    :domain,:requirements,:types,:objects,:constants,
    :predicates,:action,:parameters,:precondition,:effect,
    :fluents,:primary-effect,:side-effect,:init,:goal,
    :strips,:adl,:equality,:typing,:conditional-effects,
    :negative-preconditions,:disjunctive-preconditions,
    :existential-preconditions,:universal-preconditions,:quantified-preconditions,
    :functions,assign,increase,decrease,scale-up,scale-down,
    :metric,minimize,maximize,
    :durative-actions,:duration-inequalities,:continuous-effects,
    :durative-action,:duration,:condition
  },
  numberstyle=\ttfamily\tiny,
  basicstyle=\ttfamily\tiny
}
\lstdefinestyle{isainline}{
  language=isabelle,
  basicstyle=%
    \ttfamily\small
}
\title{A Formal Analysis of RANKING} 
\author{Mohammad Abdulaziz}{King's College London, United Kingdom\and Technische Universität München, Germany}{mansour@in.tum.de}{https://orcid.org/0000-0002-8244-518X}{Part of this work was funded by DFG Koselleck Grant
  NI 491/16-1}
\author{Christoph Madlener}{Technische Universität München, Germany}{madlener@in.tum.de}{https://orcid.org/0000-0002-9577-0061}{}
\authorrunning{M. Abdulaziz and C. Madlener} 
\keywords{Matching Theory, Formalized Mathematics, Online Algorithms} 
\begin{document}
\ifpdf
\else
\begin{center}
\textbf{Warning:} The DVI version of this paper may be corrupted.  If possible, use the PDF version.
\end{center}
\fi
\renewcommand{\todo}[1]{}
\maketitle

\begin{abstract}
We describe a formal correctness proof of RANKING, an online algorithm for online bipartite matching.
An outcome of our formalisation is that it shows that there is a gap in all combinatorial proofs of the algorithm.
Filling that gap constituted the majority of the effort which went into this work.
This is despite the algorithm being one of the most studied algorithms and a central result in theoretical computer science.
This gap is an example of difficulties in formalising graphical arguments which are ubiquitous in the theory of computing.
\end{abstract}
\providecommand{\insts}{}
\renewcommand{\insts}{\ensuremath{\Delta}}
\providecommand{\inst}{\ensuremath{\tvsal}}
\newcommand{\act}{\ensuremath{\pi}}
\newcommand{\asarrow}[1]{\vec{#1}}
\renewcommand{\vec}[1]{\overset{\rightarrow}{#1}}
\newcommand{\as}{\ensuremath{\vec{{\act}}}}

\newcommand{\etc}{\textit{etc.}}
\newcommand{\versus}{\textit{vs.}}

\newcommand{\Ie}{I.e.}
\newcommand{\eg}{e.g.}
\newcommand*{\ie}{i.e.\ }
\newcommand{\abziz}[1]{\textcolor{brown}{#1}}
\newcommand{\sublist}[2]{ \ensuremath{#1} \preceq\!\!\!\raisebox{.4mm}{\ensuremath{\cdot}}\; \ensuremath{#2}}
\newcommand{\subscriptsublist}[2]{\ensuremath{#1}\preceq\!\raisebox{.05mm}{\ensuremath{\cdot}}\ensuremath{#2}}
\newcommand{\PLS}{\Pi^\preceq\!\raisebox{1mm}{\ensuremath{\cdot}}}
\newcommand{\PLScharles}{\Pi^d}
\newcommand{\execname}{\mathsf{ex}}
\newcommand{\IndHyp}{\mathsf{IH}}
\newcommand{\exec}[2]{#2(#1)}

\newcommand{\ancestorssymbol}{\textsf{\upshape ancestors}}
\newcommand{\ancestors}{\ancestorssymbol}
\newcommand{\satpreas}[2]{\ensuremath{sat_precond_as(s, \as)}}
\newcommand{\proj}[2]{\ensuremath{#1{\downharpoonright}_{#2}}}
\newcommand{\dep}[3]{\ensuremath{#2 {\rightarrow} #3}}
\newcommand{\deptc}[3]{\ensuremath{#2 {\rightarrow^+} #3}}
\newcommand{\negdep}[3]{\ensuremath{#2 \not\rightarrow #3}}
\newcommand{\leavessymbol}{\textsf{\upshape leaves}}
\newcommand{\leaves}{\leavessymbol}

\newcommand{\childrensymbol}{\textsf{\upshape children}}
\newcommand{\children}[2]{\mathcal{\childrensymbol}_{#2}(#1)}
\newcommand{\succsymbol}{\textsf{\upshape succ}}
\newcommand{\succstates}[2]{\succsymbol(#1, #2)}
\newcommand{\concat}{\#}
\newcommand{\RG}{\cite{Rintanen:Gretton:2013}\ }
\newcommand{\cupdot}{\charfusion[\mathbin]{\cup}{\cdot}}
\newcommand{\bigcupdot}{\charfusion[\mathop]{\bigcup}{\cdot}}
\newcommand{\cuparrow}{\charfusion[\mathbin]{\cup}{{\raisebox{.5ex} {\smathcalebox{.4}{\ensuremath{\leftarrow}}}}}}
\newcommand{\bigcuparrow}{\charfusion[\mathop]{\bigcup}{\leftarrow}}
\newcommand{\finiteunion}{\cuparrow}
\newcommand{\finitemap}{\ensuremath{\sqsubseteq}}
\newcommand{\dgraph}{dependency graph}
\newcommand{\domain}[1]{{\sc #1}}
\newcommand{\solver}[1]{{\sc #1}}
\providecommand{\problem}[1]{\domain{#1}}
\renewcommand{\v}{\ensuremath{\mathit{v}}}
\providecommand{\vs}[1]{\domain{#1}}
\renewcommand{\vs}{\ensuremath{\mathit{vs}}}
\newcommand{\VS}{\ensuremath{\mathit{VS}}}
\newcommand{\Aut}{\ensuremath{\mathit{Aut}}}
\newcommand{\Inst}[2]{\ensuremath{\mathit{#2 \rightarrow_{#1} #1}}}
\newcommand{\Image}{\ensuremath{\mathit{Im}}}
\newcommand{\Img}[2]{\protect{#1 \llparenthesis #2 \rrparenthesis}}
\newcommand{\SND}{\ensuremath{\mathit{\pi_2}}}
\newcommand{\FST}{\ensuremath{\mathit{\pi_1}}}
\newcommand{\tvsal}{{\pitchfork}}
\newcommand{\nauty}{CGIP}

\newcommand{\pwinter}{\ensuremath{\mathit{\bigcap_{pw}}}}

\newcommand{\dom}{\ensuremath{\mathit{\mathcal{D}}}}
\newcommand{\codom}{\ensuremath{\mathcal{R}}}

\newcommand{\map}{\ensuremath{\mathit{map}}}
\newcommand{\BIJEC}{\ensuremath{\mathit{bij}}}
\newcommand{\INJ}{\ensuremath{\mathit{inj}}}
\newcommand{\funion}{\ensuremath{\overset{\leftarrow}{\cup}}}

\newcommand{\ifnew}{\mbox{\upshape \textsf{if}}}
\newcommand{\thennew}{\mbox{\upshape \textsf{then}}}
\newcommand{\elsenew}{\mbox{\upshape \textsf{else}}}
\newcommand{\choice}{\mbox{\upshape \textsf{ch}}}
\newcommand{\arbchoice}{\mbox{\upshape \textsf{arb}}}
\newcommand{\acycchoice}{\mbox{\upshape \textsf{ac}}}
\newcommand{\cycchoice}{\mbox{\upshape \textsf{cyc}}}
\newcommand{\filter}{\ensuremath{\mathit{FIL}}}
\newcommand{\probset}{\ensuremath{\boldsymbol \Pi}}
\newcommand{\probleq}{\ensuremath{\leq_\Pi}}
\newcommand{\CommVar}{\ensuremath{\bigcap_\v} }
\newcommand{\quotfun}{\ensuremath{ \mathcal{Q}}}

\newcommand{\apre}{\mbox{\upshape \textsf{pre}}}
\newcommand{\aeff}{\mbox{\upshape \textsf{eff}}}
\newcommand{\problist}{\ensuremath \probset}
\newcommand{\cat}{{\frown}}
\newcommand{\probproj}[2]{{#1}{\downharpoonright}^{#2}}
\newcommand{\preced}{\mathbin{\rotatebox[origin=c]{180}{\ensuremath{\rhd}}}}
\newcommand{\perm}{\ensuremath{\sigma}}
\newcommand{\invariant}[2]{\ensuremath{\mathit{inv({#1},{#2})}}}
\newcommand{\invstates}[1]{\ensuremath{\mathit{inv({#1})}}}
\newcommand{\probss}[1]{{\mathcal S}(#1)}
\newcommand{\parChildRel}[3]{\ensuremath{\negdep{#1}{#2}{#3}}}
\newcommand{\asessymbol}{\ensuremath{\mathbb{A}}}
\newcommand{\ases}[1]{{#1}^*}
\newcommand{\uniStates}{\ensuremath{\mathbb{U}}}
\newcommand{\recurrenceDiam}{\ensuremath{\mathit{rd}}}
\newcommand{\recurrenceAcycDiamfun}{\ensuremath{\mathit{{\mathfrak A}}}}
\newcommand{\recurrenceDiamfun}{\ensuremath{\mathit{\mathfrak R}}}
\newcommand{\traversalDiam}{\ensuremath{\mathit{td}}}
\newcommand{\traversalDiamfun}{\ensuremath{\mathit{\mathfrak T}}}
\newcommand{\isPrefix}[2]{\ensuremath{#1 \preceq #2}}
\providecommand{\path}{\ensuremath{\gamma}}
\newcommand{\aspath}{\ensuremath{\vec{\path}}}
\renewcommand{\path}{\ensuremath{\gamma}}
\newcommand{\n}{\textsf{\upshape n}}
\providecommand{\graph}{}
\renewcommand{\graph}{{\cal G}}
\newcommand{\undirgraph}{{\cal G}}

\renewcommand{\ss}{\ensuremath{\state s}}
\newcommand{\slist}{\ensuremath{\vec{\mbox{\upshape \textsf{ss}}}}}
\newcommand{\sll}{\ensuremath{\vec{\state}}}
\newcommand{\listset}{\mbox{\upshape \textsf{set}}}
\newcommand{\asset}{\ensuremath{\mathit{K}}}
\newcommand{\aslist}{\ensuremath{\mathit{\overset{\rightarrow}{\gamma}}}}
\newcommand{\head}{\mbox{\upshape \textsf{first}}}
\renewcommand{\max}{\textsf{\upshape max}}
\newcommand{\argmax}{\textsf{\upshape argmax}}
\newcommand{\argmin}{\textsf{\upshape argmin}}
\renewcommand{\min}{\textsf{\upshape min}}
\newcommand{\bool}{\mbox{\upshape \textsf{bool}}}
\newcommand{\last}{\mbox{\upshape \textsf{last}}}
\newcommand{\front}{\mbox{\upshape \textsf{front}}}
\newcommand{\rot}{\mbox{\upshape \textsf{rot}}}
\newcommand{\stuff}{\mbox{\upshape \textsf{intlv}}}
\newcommand{\tail}{\mbox{\upshape \textsf{tail}}}
\newcommand{\ngrtoas}{\ensuremath{\mathit{\as_{\graph_\mathbb{N}}}}}
\newcommand{\vsfun}{\mbox{\upshape \textsf{vs}}}
\newcommand{\inits}{\mbox{\upshape \textsf{init}}}
\newcommand{\satprecondas}{\mbox{\upshape \textsf{sat-pre}}}
\newcommand{\remcondlessact}{\mbox{\upshape \textsf{rem-condless}}}
\providecommand{\state}{}
\renewcommand{\state}{x}
\newcommand{\statea}{x}
\newcommand{\stateb}{y}
\newcommand{\statec}{z}
\newcommand{\fals}{\mbox{\upshape \textsf{F}}}
\newcommand{\indices}{\ensuremath{V}}
\newcommand{\edges}{\ensuremath{E}}
\newcommand{\vertices}{\ensuremath{V}}
\newcommand{\listtype}{\mbox{\upshape \textsf{list}}}
\newcommand{\settype}{\mbox{\upshape \textsf{set}}}
\newcommand{\acttype}{\mbox{\upshape \textsf{action}}}
\newcommand{\graphtype}{\mbox{\upshape \textsf{graph}}}
\newcommand{\projfun}[2]{\ensuremath{\Delta_{#1}^{#2}}}
\newcommand{\snapfun}[2]{\ensuremath{\Sigma_{#1}^{#2}}}
\newcommand{\RDfun}[1]{\ensuremath{{\mathcal R}_{#1}}}
\newcommand{\elldbound}[1]{\ensuremath{{\mathcal LS}_{#1}}}
\newcommand{\distinct}{\textsf{\upshape distinct}}
\newcommand{\ddistinct}{\mbox{\upshape \textsf{ddistinct}}}
\newcommand{\simple}{\mbox{\upshape \textsf{simple}}}

\newcommand{\reachable}[3]{\ensuremath{{#1}\rightsquigarrow{#3}}}

\newcommand{\Omit}[1]{}

\newcommand{\charles}[1]{\textcolor{red}{#1}}

\newcommand{\negreachable}[3]{\ensuremath{{#2}\not\rightsquigarrow{#3}}}
\newcommand{\wdiam}[2]{{#1}^{#2}}
\newcommand{\dsnapshot}[2]{\Delta_{#1}}
\newcommand{\ellsnapshot}[2]{{\mathcal L}_{#1}}

\newcommand{\snapshotsymbol}{|\kern-.7ex\raise.08ex\hbox{\scalebox{0.7}{$\bullet$}}}
\newcommand{\snapshot}[2]{\ensuremath{\mathrel{#1\snapshotsymbol_{#2}}}}
\newcommand{\vstype}{\texttt{\upshape VS}}
\newcommand{\vtype}{{\scriptsize \ensuremath{\dom(\delta)}}}
\newcommand{\Balgo}{{\mbox{\textsc{Hyb}}}}
\newcommand{\ssgraph}[1]{\graph_\ss}
\newcommand{\agree}{\textsf{\upshape agree}}
\newcommand{\ck}{\ensuremath{\texttt{ck}}}
\newcommand{\lk}{\ensuremath{\texttt{lk}}}
\newcommand{\gr}{\ensuremath{\texttt{gr}}}
\newcommand{\gk}{\ensuremath{\texttt{gk}}}
\newcommand{\CK}{\ensuremath{\texttt{CK}}}
\newcommand{\LK}{\ensuremath{\texttt{LK}}}
\newcommand{\GR}{\ensuremath{\texttt{GR}}}
\newcommand{\GK}{\ensuremath{\texttt{GK}}}
\newcommand{\safe}{\ensuremath{\texttt{s}}}

\newcommand{\derivname}{\ensuremath{\partial}}
\newcommand{\deriv}[3]{\ensuremath{\derivname(#1,#2,#3)}}
\newcommand{\derivabbrev}[3]{\ensuremath{{\partial(#1,#2)}}}
\newcommand{\subsetoracle}{\ensuremath{ \Omega}}
\newcommand{\Aalgo}{{\mbox{\textsc{Pur}}}}
\newcommand{\Sname}{\textsf{\upshape S}}
\newcommand{\Sbrace}[1]{\Sname\langle#1\rangle}
\newcommand{\SalgoName}{\Sname_{\textsf{\upshape max}}}
\newcommand{\Salgo}[1]{\SalgoName\langle#1\rangle}

\newcommand{\WLPname}{{\mbox{\textsc{wlp}}}}
\newcommand{\WLPbrace}[1]{\WLPname\langle#1\rangle}
\newcommand{\WLPalgoName}{\WLPname_{\textsf{\upshape max}}}
\newcommand{\WLP}[1]{\WLPalgoName\langle#1\rangle}

\newcommand{\Nname}{\ensuremath{\textsf{\upshape N}}}
\newcommand{\Nbrace}[1]{\Nname\langle#1\rangle}
\newcommand{\NalgoName}{\Nname{_{\textsf{\upshape sum}}}}
\newcommand{\Nalgobrace}[1]{\NalgoName\langle#1\rangle}

\newcommand{\acycNname}{\widehat{\textsf{\upshape N}}}
\newcommand{\acycNbrace}[1]{\acycNname\langle#1\rangle}
\newcommand{\acycNalgoName}{\acycNname{_{\textsf{\upshape sum}}}}
\newcommand{\acycNalgobrace}[1]{\acycNalgoName\langle#1\rangle}

\newcommand{\Mname}{\ensuremath{\textsf{\upshape M}}}
\newcommand{\Mbrace}[1]{\Mname\langle#1\rangle}
\newcommand{\MalgoName}{\Mname{_{\textsf{\upshape sum}}}}
\newcommand{\Malgobrace}[1]{\MalgoName\langle#1\rangle}
\newcommand{\cardinality}[1]{{\ensuremath{|#1|}}}
\newcommand{\length}[1]{\cardinality{#1}}
\newcommand{\basecasefun}{\ensuremath{b}}
\newcommand{\Basecasefun}{\ensuremath{\mathcal B}}

\newcommand{\edgegen}{\ensuremath{e}}
\newcommand{\vertexgen}{\ensuremath{u}}
\newcommand{\vertexa}{{\ensuremath{\vertexgen_1}}}
\newcommand{\vertexb}{{\ensuremath{\vertexgen_2}}}
\newcommand{\vertexc}{{\ensuremath{\vertexgen_3}}}
\newcommand{\vertexd}{{\ensuremath{\vertexgen_4}}}
\newcommand{\vertexe}{{\ensuremath{\vertexgen_5}}}
\newcommand{\vertexf}{{\ensuremath{\vertexgen_6}}}
\newcommand{\vertexg}{{\ensuremath{\vertexgen_7}}}
\newcommand{\vertexsetgen}{\ensuremath{\mathit{us}}}
\newcommand{\vertexseta}{\vertexsetgen_1}
\newcommand{\vertexsetb}{\vertexsetgen_2}
\newcommand{\labelsymbol}{\ensuremath{l}}
\newcommand{\labelfun}{\ensuremath{\mathcal{L}}}
\newcommand{\DAG}{\ensuremath{A}}
\newcommand{\NalgoNameN}{{\ensuremath{\NalgoName_{\mathbb{N}}}}}
\newcommand{\NnameN}{\ensuremath{\Nname_\mathbb{N}}}
\newcommand{\replaceprojsinglename}{\raisebox{-0.3mm} {\scalebox{0.7}{\textpmhg{H}}}}
\newcommand{\replaceprojsingle}[3] {{ #2} \underset {#1} {\raisebox{-0.3mm} {\scalebox{0.7}{\textpmhg{H}}}}  #3}
\newcommand{\HOLreplaceprojsingle}[1]{\underset {#1} {\raisebox{-0.3mm} {\scalebox{0.7}{\textpmhg{H}}}}}

\newcommand{\lotus}{{\scalebox{0.6}{\includegraphics{lotus.pdf}}}}
\newcommand{\invlotus}{\mathbin{\rotatebox[origin=c]{180}{$\lotus$}}}
\newcommand{\clique}{\ensuremath{K}}
\newcommand{\partition}{\ensuremath{\vs_{1..n}}}
\newcommand{\partitiontype}{\ensuremath{\vstype_{1..n}}}
\newcommand{\vtxpartition}{\ensuremath{P}}

\newcommand{\traversalDiamAlgo}{{\mbox{\textsc{TravDiam}}}}
\newcommand{\prefix}{\textsf{\upshape pfx}}
\newcommand{\powerset}{\mathbb{P}}
\newcommand{\postfix}{\textsf{\upshape sfx}}
\newcommand{\dfunproj}{\ensuremath{{\mathfrak D}}}
\newcommand{\dfunsnap}{\ensuremath{{\textgoth D}}}
\newcommand{\ellfunproj}{\ensuremath{\mathfrak L}}
\newcommand{\ellfunsnap}{\ensuremath{\textgoth L}}
\newcommand{\cycle}{\ensuremath{C}}
\newcommand{\petal}{\ensuremath{\eta}}
\renewcommand{\prod}{\ensuremath{{{{{\mathlarger{\mathlarger {{\mathlarger {\Pi}}}}}}}}}}
\newcommand{\sccset}{{\ensuremath{SCC}}}
\newcommand{\scc}{{\ensuremath{scc}}}
\newcommand{\negate}[1]{\overline{#1}}
\newcommand{\setofsets}{\ensuremath{S}}
\newcommand{\group}{\ensuremath{\cal \Gamma}}
\newcommand{\neededvars}{{\cal N}}
\newcommand{\sspace}{\mbox{\upshape \textsf{sspc}}}
\newcommand{\tip}{\ensuremath{t}}
\newcommand{\vara}{\ensuremath{\v_1}}
\newcommand{\varb}{\ensuremath{\v_2}}
\newcommand{\varc}{\ensuremath{\v_3}}
\newcommand{\vard}{\ensuremath{\v_4}}
\newcommand{\vare}{\ensuremath{\v_5}}
\newcommand{\varf}{\ensuremath{\v_6}}
\newcommand{\varg}{\ensuremath{\v_7}}
\newcommand{\varh}{\ensuremath{\v_8}}
\newcommand{\vari}{\ensuremath{\v_9}}
\newcommand{\acta}{\ensuremath{\act_1}}
\newcommand{\actb}{\ensuremath{\act_2}}
\newcommand{\actc}{\ensuremath{\act_3}}
\newcommand{\actd}{\ensuremath{\act_4}}
\newcommand{\acte}{\ensuremath{\act_5}}
\newcommand{\actf}{\ensuremath{\act_6}}
\newcommand{\actg}{\ensuremath{\act_7}}
\newcommand{\acth}{\ensuremath{\act_8}}
\newcommand{\acti}{\ensuremath{\act_9}}

\tikzset{dots/.style args={#1per #2}{line cap=round,dash pattern=on 0 off #2/#1}}
\providecommand{\moham}[1]{\fbox{{\bf \@Mohammad: }#1}}
\newcommand{\TDbound}{{\mbox{\textsc{Arb}}}}
\newcommand{\expbound}{{\mbox{\textsc{Exp}}}}
\newcommand{\sasdom}{\expbound}
\newcommand{\cardfun}{\ensuremath{\mathbb{C}}}
\newcommand{\AGNa}{AGN1}
\newcommand{\AGNb}{AGN2}
\newcommand{\reset}{{\ensuremath{reset}}}

\newcommand{\matching}{{\cal M}}
\newcommand{\BlossomAlg}{{\mbox{\textsc{Find\_Max\_Matching}}}}
\newcommand{\AugPathAlg}{{\mbox{\textsc{Aug\_Path\_Search}}}}
\newcommand{\BlossomOrAugPath}{{\mbox{\textsc{Compute\_Blossom}}}}

\newcommand{\const}[2]{\newcommand{#1}{\textsf{#2}}}
\newcommand{\type}[2]{\newcommand{#1}{\textsf{#2}}}
\newcommand{\key}[2]{\newcommand{#1}{\textbf{#2}}}

\newcommand{\typef}[1]{\mathsf{#1}}
\newcommand{\constf}[1]{\textsf{#1}}

\newcommand{\Kzero}{\textsf{P}_\textsf{X}}
\newcommand{\Kstep}{\constf{P}_\textsf{step}}
\newcommand{\Lact}{\constf{L}_\textsf{act}}
\newcommand{\Sstep}{\constf{S}_\textsf{step}}
\newcommand{\measurable}{\mathbin{\rightarrow_M}}
\newcommand{\bindop}{\mathbin{>\!\!\!>\mkern-6.7mu=}}
\newcommand{\tendsto}{\xrightarrow{\hphantom{AAA}}}

\const{\bind}{bind}

\const{\MDPreward}{MDP-reward}
\const{\MDP}{MDP}
\const{\C}{C}

\const{\Ane}{A-ne}
\const{\subprobalgebra}{subprob-algebra}
\const{\policystep}{policy-step}
\const{\policyimprovement}{policy-improvement}
\const{\policyiteration}{policy-iteration}
\const{\streamspace}{stream-space}
\const{\completespace}{complete-space}
\const{\argmaxA}{arg-max}
\const{\hasargmax}{has-arg-max}
\const{\maxLex}{max-L-ex}
\const{\findpolicy}{find-policy}
\const{\vipolicy}{vi-policy}
\const{\conserving}{conserving}
\const{\clog}{log}
\const{\vi}{value-iteration}
\const{\improving}{improving}
\const{\pmf}{pmf}
\const{\isdec}{is-dec}
\const{\ispolicy}{is-policy}
\const{\isdecdet}{is-dec-det}
\const{\mkdecdet}{mk-dec-det}
\const{\mkstationary}{mk-stationary}
\const{\falseA}{False}
\const{\trueA}{True}
\const{\Suc}{Suc}
\const{\fst}{fst}
\const{\snd}{snd}
\const{\probspace}{$\mathcal{P}$}
\const{\tracespace}{$\mathcal{T}$}

\const{\return}{return}
\const{\cempty}{empty}

\const{\prob}{$\mathbb{P}$}
\const{\setpmf}{set-pmf}
\const{\mappmf}{map-pmf}
\const{\mapA}{map}
\const{\returnpmf}{return-pmf}
\const{\cspace}{space}
\const{\sets}{sets}
\const{\indicator}{indicator}
\const{\countspace}{count-space}
\const{\clim}{lim}
\const{\csup}{$\bigsqcup$}
\const{\policies}{$\Pi$}
\const{\vecB}{$V_B$}

\const{\bfun}{bfun}
\const{\bounded}{bounded}
\const{\range}{range}
\const{\undefined}{undefined}
\const{\dist}{$d_\infty$}
\const{\reverse}{reverse}
\const{\IT}{IT}
\const{\norm}{norm}
\const{\borel}{borel}
\const{\T}{T}
\const{\cP}{P}
\const{\distr}{distr}
\const{\univ}{UNIV}
\const{\real}{$\mathbb{R}$}
\const{\ereal}{ereal}
\const{\Up}{Up}
\const{\Right}{Right}
\const{\Down}{Down}
\const{\action}{action}
\const{\Left}{Left}
\const{\cAE}{AE}
\const{\cin}{in}
\const{\X}{X}
\const{\Y}{Y}
\const{\Pt}{$\mathcal{X}$}
\const{\id}{id}
\const{\cL}{L}
\const{\Trap}{Trap}
\const{\stateA}{state}
\const{\Pos}{Pos}
\newcommand{\LL}{\mathcal{L}}
\newcommand{\LLb}{\mathcal{L}_b}
\newcommand{\Kst}{\textsf{K}_\mathsf{st}}
\newcommand{\EK}{\mathcal{K}_\mathsf{st}}
\newcommand{\pushexp}{\textsf{pushexp}}
\newcommand{\rdec}{\textsf{r}_\mathsf{dec}}
\newcommand{\rb}{\textsf{r}_\mathsf{b}}
\const{\etr}{etr}
\newcommand{\etrfin}{\textsf{etr}_\mathsf{fin}}
\newcommand{\etropt}{\nu^*}
\newcommand{\rM}{\mathsf{r}_\mathsf{M}}
\newcommand{\TT}{\mathcal{T}}
\newcommand{\PP}{\mathcal{P}}
\newcommand{\XX}{\textsf{X}}
\newcommand{\YY}{\textsf{Y}}
\newcommand{\KKzero}{\textsf{P}_\textsf{X}}
\const{\condpmf}{cond-pmf}
\const{\asmarkovian}{as-markovian}
\const{\actstar}{act*}
\const{\finite}{finite}
\const{\isargmax}{is-arg-max}
\const{\GS}{\textsf{G}}

\newcommand{\pisuc}{\pi\textsf{-Suc}}
\newcommand{\YX}{\textsf{Y}^\textsf{X}}

\key{\klocale}{locale}
\key{\typedef}{typedef}
\key{\datatype}{datatype}
\key{\kfix}{fixes}
\const{\cfix}{fix}
\key{\kand}{and}
\key{\klet}{let}
\key{\kin}{in}
\key{\kif}{if}
\key{\then}{then}
\key{\kelse}{else}
\key{\kdo}{do}
\key{\kcase}{case}
\key{\kof}{of}
\key{\assume}{assumes}
\key{\kshow}{shows}

\type{\boolA}{bool}
\type{\nat}{$\mathbb{N}$}
\type{\set}{set}
\type{\tlist}{list}
\type{\stream}{stream}
\type{\measurepmf}{measure-pmf}
\type{\probalgebra}{prob-algebra}
\type{\vsigma}{vsigma}
\type{\streams}{streams}
\type{\cbind}{bind}
\type{\measure}{measure}
\type{\emeasure}{emeasure}
\type{\metricspace}{metric-space}
\type{\realnormedvector}{real-normed-vector}
\type{\realvector}{real-vector}
\type{\countable}{countable}
\type{\dec}{dec}
\type{\pol}{pol}

\renewcommand{\iff}{\longleftrightarrow}

\section{Introduction}

Matching is a classical problem in computer science, operations research, graph theory, and combinatorial optimisation.
In short, in this problem, given an undirected graph, one tries to compute a subset of the edges of this graph, s.t.\ no two edges are incident on the same vertex.
This subset is usually optimised w.r.t.\ a given objective, e.g.\ matching cardinality, sum of weights of edges in the matching, etc.
An important special case of matching problems is maximum cardinality matching in bipartite graphs.
It is one of the first problems to be addressed in combinatorial optimistation, where, for instance, the Hungarian method was invented in 1955 to solve it in the edge-weighted setting~\cite{HungarianMethodAssignment}.
The online version of that problem, i.e.\ the version in which one of the parties of the graphs arrive online, one vertex at a time, along with its incident edges, has received special attention.
This is because the problem can model many economic situations, most-notably Google's Adwords market~\cite{AdWords2007}.

The most basic version of online bipartite matching is the one where vertices and edges have no weights.
That problem was studied by Karp, Vazirani, and Vazirani (henceforth, KVV)~\cite{karpOptimalAlgorithmOnline1990}, where they devised the so-called RANKING algorithm.
In that paper, KVV showed that their algorithm can solve the online problem with a \emph{competitive ratio}, i.e.\ the average case ratio of the online algorithm's solution quality compared to the best offline algorithm, of $1-1/e$.
They also showed that this ratio is the best possible for any randomised online bipartite matching algorithm.
The analysis of the RANKING algorithm been continuously studied, where authors have mainly tried to simplify the algorithm's original correctness proof, i.e.\ the proof that it achieves a $1-1/e$ competitive ratio~\cite{GoelMehtaOnlineMatching2008,onlineMatchingSimple2008,devanurRandomizedPrimalDualAnalysis2013,onlineMatchingEcon,vaziraniOnlineMatching2022,rankingHighProbability}.
This is because the algorithm's analysis, which can be divided into a probabilistic and a combinatorial part, is considered to be ``extremely difficult''~\cite{vaziraniOnlineMatchingArxiv} by the algorithms community, despite the algorithm itself being very simple.

In this paper we formalise an analysis of the algorithm by Birnbaum and Mathieu~\cite{onlineMatchingSimple2008} (henceforth, BM) in Isabelle/HOL~\cite{DBLP:books/sp/NipkowPW02}.
BM claim to present the first simple proof of the algorithm's competitive ratio.
Indeed, the paper's title is ``Online bipartite matching made simple'', and it is the last attempt at a simple combinatorial proof for the algorithm, as later works focused on primal-dual analyses of the algorithm.

Our most striking finding is that there is a ``gap'' in the proof, where there was one lemma whose proof was ``a simple structural observation'' by the authors.
Formalising the proof of this lemma constitutes the majority of the effort that went into the work we describe here as well as the majority of the volume of the formal proof scripts.
There are also other interesting aspects, from a formalisation perspective, of that proof.
For instance, it combines graph theoretic, probabilistic, and graphical arguments.
It also requires modelling and reasoning about online algorithms.

The rest of the paper is structured as follows.
We first describe the algorithm and how we model it in Isabelle/HOL.
Then we discuss the the probabilistic part of the proof and its formalisation.
We then discuss the combinatorial part of the proof, where we describe the main findings of this work, namely, \begin{enumerate*}\item the first complete proof that covers the gap in the proof by BM, as well as other combinatorial proofs of the algorithm, and \item a significantly simpler proof of a lemma needed by BM to facilitate the algorithm's probabilistic analysis.\end{enumerate*}
Lastly, we discuss a part of the proof usually glossed over by other authors, which is lifting the analysis to obtain an asymptotic statement on the competitive ratio.

\noindent \textbf{Isabelle/HOL}
Isabelle/HOL~\cite{paulson1994isabelle} is a theorem prover based on Higher-Order Logic.
Roughly speaking, Higher-Order Logic can be seen as a combination of functional programming with logic.
Isabelle's syntax is a variation of Standard ML combined with (almost) standard mathematical notation.
Function application is written infix, and functions are usually curried (i.e., function $f$ applied to arguments $x_1~\ldots~x_n$ is written as $f~x_1~\ldots~x_n$ instead of the standard notation $f(x_1,~\ldots~,x_n)$).
In Isabelle/HOL, \textit{SOME} is the Hilbert choice, and \textit{THE} is the definite description operator.

\noindent \textbf{Availability}
Our formalisation is in the supplementary material and will be available online in case of acceptance.
Throughout the paper, and in the appendix, we added excerpts from the formalisation representing important definitions and theorem statements to aid in linking the informal description in the paper and the formal proof scripts.

\renewcommand{\vertices}{\ensuremath{\mathcal{V}}}
\newcommand{\lparty}{\ensuremath{V}}
\newcommand{\rparty}{\ensuremath{U}}
\newcommand{\lperm}{\ensuremath{\sigma}}
\newcommand{\rperm}{\ensuremath{\pi}}
\renewcommand{\vertexgen}{\ensuremath{v}}
\newcommand{\lvertexgen}{\ensuremath{v}}
\newcommand{\lvertexa}{\lvertexgen_1}
\newcommand{\lvertexb}{\lvertexgen_2}
\newcommand{\lvertexc}{\lvertexgen_3}
\newcommand{\lvertexd}{\lvertexgen_4}
\newcommand{\lvertexe}{\lvertexgen_5}
\newcommand{\lvertexf}{\lvertexgen_6}
\newcommand{\rvertexgen}{\ensuremath{u}}
\newcommand{\rvertexa}{\rvertexgen_1}
\newcommand{\rvertexb}{\rvertexgen_2}
\newcommand{\rvertexc}{\rvertexgen_3}
\newcommand{\rvertexd}{\rvertexgen_4}
\newcommand{\rvertexe}{\rvertexgen_5}
\newcommand{\rvertexf}{\rvertexgen_6}
\newcommand{\lorder}{\ensuremath{\pi}}
\newcommand{\rorder}{\ensuremath{\sigma}}
\newcommand{\rank}{\textit{online-match}}
\newcommand{\neighb}[2]{\ensuremath{{N_{#1} ({#2})}}}
\newcommand{\shiftsto}{\textit{shifts-to}}
\newcommand{\zig}{\textit{zig}}
\newcommand{\zag}{\textit{zag}}
\newcommand{\lpartyperm}{\ensuremath{\lparty'}}

\newcommand{\isaname}[1]{\emph{#1}}

\newcommand{\nth}[2]{#1[#2]}
\newcommand*{\uniform}{\mathcal{U}}
\newcommand*{\rankingprob}{\textit{RANKING}}
\newcommand*{\perms}{\mathcal{S}}
\newcommand*{\bernoulli}{\mathbb{I}}
\DeclarePairedDelimiter{\card}{|}{|}

\section{Basic Definitions and Notation}
We denote a list of elements as $[x_1,x_2,\dots,x_n]$.
In the rest of this paper, we only consider lists with distinct elements.
We say element $x_i$ has rank $i$\footnote{In the formalisation we use index, which is the same as the rank less one.} in the list $[x_1,x_2,\dots,x_i,\dots,x_n]$.
We overload the membership, subset, union and intersection set operations to lists.
For a list $\vs$, of length $n$, and an element $\vertexgen \in \vs$, let, for $1\leq i\leq n$, $\vs[\vertexgen \mapsto i]$ denote the list with the same elements as $\vs$, where $\vertexgen$ has rank $i$, the elements of rank less than $i$ remain unchanged and the
rank of the elements of rank at least $i$ is increased by $1$.
Also, let $\vs(\vertexgen)$ denote the rank of $\vertexgen$ in $\vs$ and $\nth{\vs}{i}$ the element of rank $i$ in $\vs$.
For a list $\vs$, 
$\vertexgen\concat\vs$ denotes the list $\vs$ but with the vertex $\vertexgen$ appended to its head.
A permutation of a finite set $s$ is a list whose elements are exactly the elements of $s$.

An edge is a set of vertices with size 2.
A graph $\graph$ is a set of edges.
The set of vertices of a graph $\graph$, denoted by $\vertices(\graph)$, is $\bigcup_{\edgegen\in\graph}\edgegen$.
For a vertex $\lvertexgen$, $\neighb{\graph}{\lvertexgen}$ denotes $\{\rvertexgen\mid\{\lvertexgen,\rvertexgen\}\in\graph\}$.
We say a graph $\graph$ is bipartite w.r.t.\ to two sets of vertices $\lparty$ and $\rparty$ (henceforth, the left and right party) iff \begin{enumerate*}\item $\vertices(\graph) \subseteq (\lparty \cup \rparty)$, \item for any $\{\lvertexgen,\rvertexgen\}\in\graph$, we have that $\{\lvertexgen,\rvertexgen\}\not\subseteq\lparty$ and $\{\lvertexgen,\rvertexgen\}\not\subseteq\rparty$.\end{enumerate*}
A set of edges $\matching$ is a matching iff
$\forall e\neq e'\in\matching.\; e \cap e' = \emptyset$. 
For a matching $\matching$ and a vertex $\lvertexgen$, if there is $\rvertexgen$ s.t.\ $\{\lvertexgen,\rvertexgen\}\in\matching$, we say $\rvertexgen$ is the partner of $\lvertexgen$, denoted by $\matching(\lvertexgen)$.
We use $\graph-E$ to denote the edges in $\graph$ that are not in $E$, and, for a set of vertices $V$, $\graph\setminus V$ denotes $\graph\cap\{\edgegen\mid \edgegen\cap V = \emptyset\}$, i.e.\ the graph with edges incident to vertices in $V$ removed.


In many cases, a matching is a subset of a graph, in which case we call it a matching w.r.t. the graph.
For a graph $\graph$, a matching $\matching$ w.r.t\ $\graph$ is a maximum cardinality matching, aka maximum matching, w.r.t.\ $\graph$ iff for any matching $\matching'$ w.r.t. $\graph$, we have that $\cardinality{\matching'} \leq  \cardinality{\matching}$.
A matching $\matching$ w.r.t.\ $\graph$ is a perfect matching w.r.t.\ $\graph$ iff $\vertices(\matching)=\vertices(\graph)$.
A matching $\matching$ w.r.t.\ $\graph$ is a maximal matching w.r.t.\ $\graph$ iff
$\forall\edgegen\in\graph.\ \edgegen\cap\vertices(\matching)\neq\emptyset$.

A discrete probability space $P$ is defined by a countable sample space $\Omega_P$ and a
probability mass function (PMF) $\prob_P : \Omega_P \to [0,1]$ assigning a
probability to each sample, where $\sum_{\omega \in \Omega_P} \prob_P(\omega) = 1$.
The PMF is lifted naturally to events (sets of samples) as
$\prob_P(E) = \sum_{\omega \in E} \prob_P(E)$ for $E \subseteq \Omega_P$. The expectation
of a random variable $X : \Omega_P \to \real$ is denoted
$\mathbb{E}_{\omega \sim P} \left[ X(\omega) \right]$.
For a set $B$ and a non-empty, finite subset $A \subseteq B$, $\uniform_B(A)$ is
the discrete uniform distribution, \ie $\Omega_{\uniform_B(A)} = B$ and
$\prob_{\uniform_B(A)}(a) = \frac{1}{\card{A}}$ if $a \in A$ and
$\prob_{\uniform(A)}(b) = 0$ if $b \notin A$. If $A = B$ we simply write
$\uniform(A)$ for $\uniform_A(A)$.

We model randomised algorithms as probability distributions over the results of the algorithm.
The Giry Monad~\cite{Giry80} allows to compose random experiments in an elegant manner and is used to express randomised algorithms. 
The $\return$ operator gives a distribution which places probability $1$ on a single sample $\omega$, \ie $\prob_{\return(\omega)}(x)$ is $1$, if $x = \omega$, and $0$, otherwise.

Composition of experiments is achieved via the $\cbind$ operator (written infix as $\bindop$).
Intuitively, $P \bindop Q$ randomly chooses a sample $\omega$ according to $P$ and then returns a value chosen randomly according to the distribution $Q(\omega)$.
For additional legibility, we use Haskell-like $\kdo$-notation for $\cbind$ and $\return$.
This notation can be desugared recursively as follows:
\begin{align*}
  & \kdo\{~x \leftarrow P ;~\textit{stmts}~\} \equiv P \bindop (\lambda x.~\textit{stmts}).
\end{align*}

In Isabelle/HOL, we base our work on a simple formalisation of undirected graphs by Abdulaziz et al.~\cite{DBLP:conf/mfcs/AbdulazizMN19}, which was introduced in the context of the verification of Edmonds' blossom matching algorithm.
The types of graphs and edges as well as the notion a matching in this formalisation are shown in Listing~\ref{isa:graph_matching}.
We use this formalisation because of its simplicity, and the fact that it has a rich library on matchings and other related notions, as we will discuss later.
However, we will not further discuss the merits of this representation as it is outside of the scope of this work.
Interested readers should consult the original paper~\cite{DBLP:conf/mfcs/AbdulazizMN19}.

Probability theory in Isabelle/HOL is based on a general formalisation of measure theory by Hölzl~\cite{Holzl13}.
In the formalisation, $\uniform(A)$ is denoted \isaname{pmf\_of\_set\ $A$}, and $\return$ is denoted \isaname{return\_pmf}.
The meanings of other Isabelle/HOL notations used in the rest of the paper should be self-explanatory.

\section{RANKING}
\begin{figure}[t]
  \begin{subfigure}[t]{0.15\textwidth}
  \begin{tikzpicture}[rotate=0,yscale=1,xscale=-1]
    \input{offline_start}
  \end{tikzpicture}
  \caption{\label{fig:offline_start}}
  \end{subfigure}
  \begin{subfigure}[t]{0.15\textwidth}
  \begin{tikzpicture}[rotate=0,yscale=1,xscale=-1]
    \input{offline_perm}
  \end{tikzpicture}
  \caption{\label{fig:offline_perm}}
  \end{subfigure}
  \begin{subfigure}[t]{0.15\textwidth}
  \begin{tikzpicture}[rotate=0,yscale=1,xscale=-1]
        \input{offline_perm}
    \node (a1) at (0,0) [varnode] {\textcolor{black}{$\rvertexa$}} ;
    \draw [green,-,matchededge] (a1) -- (b1) ;
    \draw [-,matchededge] (a1) -- (b3) ;
    \draw [-,matchededge] (a1) -- (b5) ;
  \end{tikzpicture}
  \caption{\label{fig:online_a}}
  \end{subfigure}
  \begin{subfigure}[t]{0.15\textwidth}
  \begin{tikzpicture}[rotate=0,yscale=1,xscale=-1]
            \input{offline_perm}
    \node (a1) at (0,0) [varnode] {\textcolor{black}{$\rvertexa$}} ;
    \draw [green,-,matchededge] (a1) -- (b1) ;
    \draw [-,matchededge] (a1) -- (b3) ;
    \draw [-,matchededge] (a1) -- (b5) ;
    \node (a2) at (0,-1) [varnode] {\textcolor{black}{$\rvertexb$}} ;

    \draw [-,matchededge] (a2) -- (b1) ;
    \draw [green,-,matchededge] (a2) -- (b2) ;
    \draw [-,matchededge] (a2) -- (b4) ;
  \end{tikzpicture}
  \caption{\label{fig:online_b}}
  \end{subfigure}
  \begin{subfigure}[t]{0.15\textwidth}
  \begin{tikzpicture}[rotate=0,yscale=1,xscale=-1]
    \input{online_c}
  \end{tikzpicture}
  \caption{\label{fig:online_c}}
  \end{subfigure}
  \begin{subfigure}[t]{0.15\textwidth}
  \begin{tikzpicture}[rotate=0,yscale=1,xscale=-1]
    \input{online_d}
  \end{tikzpicture}
  \caption{\label{fig:online_d}}
  \end{subfigure}
  \begin{subfigure}[t]{0.15\textwidth}
  \begin{tikzpicture}[rotate=0,yscale=1,xscale=-1]
    \input{online_e}
  \end{tikzpicture}
  \caption{\label{fig:online_e}}
  \end{subfigure}
  ~~~~
  \begin{subfigure}[t]{0.15\textwidth}
  \begin{tikzpicture}[rotate=0,yscale=1,xscale=-1]
    \input{online_f}
  \end{tikzpicture}
  \caption{\label{fig:online_f}}
  \end{subfigure}
  ~~~~
  \begin{subfigure}[t]{0.2\textwidth}
  \begin{tikzpicture}[rotate=0,yscale=1,xscale=-1]
    \input{online_g}
  \end{tikzpicture}
  \caption{\label{fig:online_g}}
  \end{subfigure}
  \begin{subfigure}[t]{0.2\textwidth}
  \begin{tikzpicture}[rotate=0,yscale=1,xscale=-1]
        {\node (c1) at (0,0) [varnode] {\textcolor{black}{$\rvertexa$}} ;}
    {\node (c2) at (0,-1) [varnode] {\textcolor{red}{$\rvertexb$}} ;}
    {\node (c3) at (0,-2) [varnode] {\textcolor{black}{$\rvertexc$}} ;}
    {\node (c4) at (0,-3) [varnode] {\textcolor{black}{$\rvertexd$}} ;}
    {\node (c5) at (0,-4) [varnode] {\textcolor{black}{$\rvertexe$}} ;}
    {\node (c6) at (0,-5) [varnode] {\textcolor{black}{$\rvertexf$}} ;}

    \input{offline_perm_nov1}

    \draw [gray,loosely dotted,-,matchededge] (c1) -- (b3) ;
    \draw [gray,loosely dotted,-,matchededge] (c1) -- (b5) ;

    \draw [gray,loosely dotted,-,matchededge] (c2) -- (b1) ;
    \draw [blue,->,matchededge] (c2) -- (b2) node[midway,above] {\scriptsize$\zig(\rvertexb)$};
    \draw [gray,loosely dotted,-,matchededge] (c2) -- (b4) ;

    \draw [red,<-,matchededge] (c3) -- (b2) node[midway,above] {\scriptsize$\zag(\lvertexb)$};
    \draw [gray,loosely dotted,-,matchededge] (c3) -- (b1) ;
    \draw [blue,->,matchededge] (c3) -- (b4) node[midway,above] {\scriptsize$\zig(\rvertexc)$};

    \draw [gray,loosely dotted,-,matchededge] (c4) -- (b1) ;
    \draw [gray,loosely dotted,-,matchededge] (c4) -- (b3) ;

    \draw [red,<-,matchededge] (c5) -- (b4) node[midway,above] {\scriptsize$\zag(\lvertexd)$};
    \draw [blue,->,matchededge] (c5) -- (b5) node[midway,above] {\scriptsize$\zig(\rvertexe)$};

    \draw [red,<-,matchededge] (c6) -- (b5) node[midway,above] {\scriptsize$\zag(\lvertexe)$};
  \end{tikzpicture}
  \caption{\label{fig:online_zig}}
  \end{subfigure}
  \begin{subfigure}[t]{0.2\textwidth}
  \begin{tikzpicture}[rotate=0,yscale=1,xscale=1]
        {\node (c1) at (0,0) [varnode] {\textcolor{black}{$\rvertexa$}} ;}
    {\node (c2) at (0,-1) [varnode] {\textcolor{gray}{$\rvertexb$}} ;}
    {\node (c3) at (0,-2) [varnode] {\textcolor{black}{$\rvertexc$}} ;}
    {\node (c4) at (0,-3) [varnode] {\textcolor{black}{$\rvertexd$}} ;}
    {\node (c5) at (0,-4) [varnode] {\textcolor{black}{$\rvertexe$}} ;}
    {\node (c6) at (0,-5) [varnode] {\textcolor{black}{$\rvertexf$}} ;}

    \input{offline_perm_nov1}

    \draw [gray,loosely dotted,-,matchededge] (c1) -- (b3) ;
    \draw [gray,loosely dotted,-,matchededge] (c1) -- (b5) ;

    \draw [gray,loosely dotted,-,matchededge] (c2) -- (b1) ;
    \draw [gray,loosely dotted,-,matchededge] (c2) -- (b2) ;
    \draw [gray,loosely dotted,-,matchededge] (c2) -- (b4) ;

    \draw [gray,loosely dotted,-,matchededge] (c3) -- (b1) ;
    \draw [red,<-,matchededge] (c3) -- (b2)  node[midway,above] {\scriptsize$\zig(\lvertexb)$};
    \draw [blue,->,matchededge] (c3) -- (b4) node[midway,above] {\scriptsize$\zag(\rvertexc)$};

    \draw [gray,loosely dotted,-,matchededge] (c4) -- (b1) ;
    \draw [gray,loosely dotted,-,matchededge] (c4) -- (b3) ;

    \draw [red,<-,matchededge] (c5) -- (b4) node[midway,above] {\scriptsize$\zig(\lvertexd)$};
    \draw [blue,->,matchededge] (c5) -- (b5) node[midway,above] {\scriptsize$\zag(\rvertexe)$};

    \draw [red,<-,matchededge] (c6) -- (b5) node[midway,above] {\scriptsize$\zig(\lvertexe)$};
  \end{tikzpicture}
  \caption{\label{fig:online_zag}}
  \end{subfigure}
\caption{\label{fig:ranking} The steps of computing a matching using \rank, and what happens when an online vertex is removed from the input.}
\end{figure}

\newcommand{\matchededgecolor}{green}
\SetAlgoSkip{}
\begin{algorithm}
    \SetKwData{Left}{left}\SetKwData{This}{this}\SetKwData{Up}{up}
    \SetKwFor{RepeatInf}{for}{do}{}
    \SetKwFor{Procedure}{function}{begin}{end}

    \DontPrintSemicolon
    \Procedure{\rank(\graph, \rperm, \lperm)}{
                             $\matching\gets\emptyset$\\
    \RepeatInf(){every arriving vertex $\rvertexgen$ in $\rperm$}{

\lIf{$\exists\lvertexgen\in(\neighb{\graph}{\rvertexgen}-\vertices(\matching))$}{$\matching\gets \matching \cup \{\{\argmin_{\lvertexgen\in(\neighb{\graph}{\rvertexgen}-\vertices(\matching))} \lperm(\lvertexgen),\rvertexgen\}\}$}
    }
    \Return $\matching$}
    \Procedure{\rankingprob(\graph, \rperm)}{
        $\lperm \gets$ a random permutation of $\lparty$\\
        \Return $\rank(\graph, \rperm, \lperm)$
    }
    \caption{Pseudo-code of $\rankingprob$}
    \label{alg:ranking}
\end{algorithm}

Given a bipartite graph $\graph$, whose left and right parties are $\lparty$ and $\rparty$, the ranking algorithm takes as an offline input $\lparty$, and a sequence $\rperm$ as an online input, where vertices, along with their adjacent edges, arrive one-by-one.
As an example, consider Fig.~\ref{fig:offline_start}, showing a graph whose left
party, i.e.\ the offline vertices, is
$\{\lvertexd, \lvertexb, \lvertexf, \lvertexa, \lvertexe, \lvertexc \}$. The
right party, i.e.\ the online vertices, arrive in the order
$[\rvertexa, \rvertexb, \rvertexc, \rvertexd, \rvertexe]$.
The first step in the algorithm is that it randomly permutes the offline input.
In our example, this is shown in Fig.~\ref{fig:offline_perm}.
Then, vertices from the right party of the graph arrive one-by-one.
The most important thing to note about that is that, for every arriving vertex $\rvertexgen$, the algorithm adds the edge connecting $\rvertexgen$ and the offline unmatched vertex with the minimum rank, if any such edge exists.
In our example, we have the ranking
$[\lvertexa, \lvertexb, \lvertexc, \lvertexd, \lvertexe, \lvertexf]$, of the offline vertices.
Fig.~\ref{fig:online_a} shows the state of the matching after the arrival of
$\rvertexa$: it has three edges connecting it to the offline vertices $\lvertexa$, $\lvertexc$, and $\lvertexe$.
The edge connecting it to $\lvertexa$ is added to the matching, as it
is unmatched and has the lowest rank among them.
Then, the other vertices on the online side arrive based on the order given earlier, and the matching is updated, as shown in Fig.~\ref{fig:online_b}-\ref{fig:online_e}, and the final matching computed by the algorithm is the one represented by the \matchededgecolor{} edges in Fig.~\ref{fig:online_f}.

As should be clear by now, the algorithm's description and, accordingly, modeling is a simple task.
The pseudo-code is in Algorithm~\ref{alg:ranking}.
In Isabelle/HOL, we model the algorithm as shown in Listing~\ref{isa:rank}.
The first two functions are recursive on lists.
The first function, \isaname{step}, is recursive on the list of the offline vertices, where, given a graph \isaname{G}, a vertex \isaname{\rvertexgen} from the online side, the list of offline vertices, and the matching, it adds to the matching the first edge it finds that connects \isaname{\rvertexgen} and an offline vertex \isaname{\lvertexgen}.
The function does the recursion on the list, assuming the list is ordered
according to the ranking of the offline vertices, with the head of the list
being the vertex with the lowest rank.
The second function, \isaname{online\_match'}, is recursive on the on the list of online vertices, where the list is ordered according to the arrival order of those vertices, where the head of the list is the earliest arriving vertex.
For each vertex in the list, \isaname{online\_match'} tries to match it to an offline vertex using \isaname{step}.
The other main function, \isaname{ranking}, chooses a permutation of the offline vertices and passes it to \isaname{online-match}.

\begin{figure*}[t]
\begin{IsabelleSnippet}[label=isa:rank]{Modelling \rankingprob\ in Isabelle/HOL.}
fun step :: "'a graph \<Rightarrow> 'a \<Rightarrow> 'a list \<Rightarrow> 'a graph \<Rightarrow> 'a graph" where
  "step _ _ [] M = M"
| "step G u (v#vs) M = (
      if v \<notin> Vs M \<and> u \<notin> Vs M \<and> {u,v} \<in> G
      then insert {u,v} M
      else step G u vs M
    )"

fun online_match' :: "'a graph \<Rightarrow> 'a list \<Rightarrow> 'a list \<Rightarrow> 'a graph \<Rightarrow> 'a graph" where
  "online_match' _ [] _ M = M"
| "online_match' G (u#us) \<sigma> M = online_match' G us \<sigma> (step G u \<sigma> M)"

abbreviation "online_match G \<pi> \<sigma> \<equiv> online_match' G \<pi> \<sigma> {}"

definition "ranking \<equiv>
  do {
    \<sigma> \<leftarrow> pmf_of_set (permutations_of_set V);
    return_pmf (online_match G \<pi> \<sigma>)
  }"
\end{IsabelleSnippet}
\end{figure*}

We note that we avoid devising an involved way to model and reason about online computation, and only model it simply as a list of inputs and a step function that operates on each online input.
This is because the algorithm description itself is simple.
The primary focus of our work here is the formalisation of the correctness argument, the mathematical part of which is the main challenge.

\subsection{Competitive Ratio of \rankingprob}

The goal of this work is formalise the analysis of \rankingprob's competitiveness.
In general, for online algorithms solving optimisation problems, the analysis focuses on the quality of their outputs in comparison with the quality of the output of the best offline algorithm, i.e.\ an algorithm which has access to the entire input before it starts computing its output.
The outcome of such an analysis is referred to as the \emph{competitive ratio} of the respective online algorithm.
In the case of bipartite matchings, the best offline algorithms, like the
Hopcroft-Karp algorithm~\cite{HopcroftKarp1973}, can compute maximum cardinality matching for bipartite graphs.
Thus, for \rankingprob, the natural way to analyse it is by showing that the size of the matching it computes maintains a certain ratio if compared to the size of the maximum matching of the input graph.
Furthermore, since \rankingprob\ is a randomised algorithm, it is natural that this relationship is in expectation.
More precisely, for \rankingprob, we have the following relation, which was
first shown by KVV: for any given graph and arrival orders, the ratio between the expected size of the matching computed by \rankingprob\ and the size of the maximum matching is $1-1/e$.
The expectation ranges over the different permutations of the offline side.

\section{Competitiveness for Bi-Partite Graphs with Perfect Matchings}

In the following, let $\graph$ be a bipartite graph w.r.t.\ $\lparty$ and
$\rparty$, s.t.\ $\matching$ is a perfect matching w.r.t.\ $\graph$, and
$\cardinality{\matching} = n$.
Let $\rperm$ be an arrival order for $\rparty$ and let $\perms(A)$ denote the
set of all permutations of a finite set $A$.\footnote{In the formalisation $\perms(A)$ is
written \isaname{permutations\_of\_set\ A}.}

The algorithm can be modelled as the following Giry monad
\[
  \rankingprob(\graph, \rperm) \equiv \kdo\ \{\ \lperm \leftarrow \uniform(\perms(\lparty));\ \return\ \rank(\graph, \rperm, \lperm)\ \}.
\]

In the following, we describe our formal proof of the analysis of the competitive ratio for instances with perfect matching.
This formal proof closely follows the one by BM.
However, we highlight the differences to the original one as they arise.

We need the following lemma (\cite[Lemma~5]{onlineMatchingSimple2008}) before the main result can be shown.
\begin{mylem}
  \label{lem:ranktmatchedbound}
  Let $x_t$ denote the probability over the random permutations of $\lparty$
  that the vertex of rank $t$ is matched by the algorithm, for $1 \leq t \leq n$. Then
  $1 - x_t \leq (1/n) \sum_{1\leq s \leq t} x_s$.
\end{mylem}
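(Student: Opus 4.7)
The plan is to reduce Lemma~\ref{lem:ranktmatchedbound} to a counting statement about permutations of $\lparty$ and then prove it via an injective shift argument powered by a structural insertion lemma --- precisely the step that BM dismiss as ``a simple structural observation'' and which, as promised in the introduction, is the ``gap'' we fill. Since $\lperm$ is drawn uniformly from $\perms(\lparty)$, we have $x_s = |A_s|/n!$ where $A_s := \{\lperm : \nth{\lperm}{s} \text{ is matched by } \rank(\graph,\rperm,\lperm)\}$, and writing $U_t := \perms(\lparty) \setminus A_t$, the lemma becomes the combinatorial inequality $n|U_t| \le \sum_{s=1}^{t}|A_s|$.

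First I would prove the central structural lemma: if $\lperm\in U_t$ with unmatched rank-$t$ vertex $v := \nth{\lperm}{t}$, then for every $1 \le s \le t$ the shifted permutation $\lperm' := \lperm[v\mapsto s]$ lies in $A_s$ --- that is, promoting $v$ to any earlier rank causes \rank\ to match it. I would establish this by a zig-zag chase comparing the parallel executions of \rank\ on $\lperm$ and on $\lperm'$, starting from the online vertex $u_0 := \matching(v)$ (well-defined because $\matching$ is perfect). Either $u_0$ matches $v$ in $\lperm'$, terminating the chain, or it matches to some other offline vertex $v_1$ whose rank was promoted by the shift; in the latter case the online vertex $u_1 := \matching(v_1)$ arrives later in $\rperm$ and one iterates. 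The alternation between \emph{zigs} (to the new $\lperm'$-partner of an online vertex) and \emph{zags} (to the $\lperm$-partner of an offline vertex) strictly advances along $\rperm$, so the chain is finite, and a case analysis on its termination forces the final zig to land on $v$.

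With the structural lemma in hand the counting concludes by an injection argument. For each fixed $s\le t$, the map $\phi_s : U_t \to A_s$ that takes $\lperm$ to the permutation obtained by promoting its unmatched rank-$t$ vertex to rank $s$ is well-defined (by the structural lemma) and injective: given the image, the promoted vertex is recovered as its rank-$s$ entry, after which the preimage is uniquely determined by the inverse shift. Partitioning $U_t$ according to the identity of its rank-$t$ vertex and summing the resulting fibrewise injections over both this identity and $s\in\{1,\dots,t\}$ yields the claimed combinatorial bound.

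The main obstacle will be the structural lemma. Although its statement is clean and the zig-zag picture is graphically compelling, making it formal requires tracking two parallel executions of \rank\ with invariants ensuring that no online vertex of the chain is revisited and that the ``new'' neighbour available to each zig always exists, together with a careful case analysis of how the chain can terminate. In our formalisation this step is by far the bulkiest component of the combinatorial proof, matching the paper's own assessment that this ``simple structural observation'' is in fact where the real work hides.
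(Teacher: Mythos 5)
Your proposal has two independent fatal problems. First, the central structural lemma (``promoting the unmatched rank-$t$ vertex to any rank $s\le t$ causes \rank\ to match it'') is false. Take $\lparty=\{b,a,v\}$ and $\rparty=\{u_1,u_2,u_3\}$ arriving in that order, $\graph=\{\{v,u_1\},\{b,u_2\},\{a,u_3\},\{b,u_1\}\}$ with perfect matching $\matching=\{\{v,u_1\},\{b,u_2\},\{a,u_3\}\}$, and $\lperm=[b,a,v]$: then $u_1$ takes $b$, $u_2$ finds its only neighbour taken, $u_3$ takes $a$, so $v$ is unmatched at rank $t=3$; but in $\lperm[v\mapsto 2]=[b,v,a]$ the vertex $u_1$ still prefers $b$ (rank $1<2$), the execution is unchanged, and $v$ remains unmatched, so $\lperm[v\mapsto 2]\notin A_2$. (Your claim is even self-refuting at $s=t$, since $\lperm[v\mapsto t]=\lperm\in U_t$.) Promoting $v$ only perturbs the run when some neighbour of $v$ was matched to a vertex of rank in $[s,t)$; the true statement---and the one the paper actually uses, namely \autoref{lem:rankunmatchedmoveto}, i.e.\ BM's Lemma~4---concerns not $v$ but its partner $\matching(v)$, which stays matched to a vertex of rank at most $\lperm(v)$ under any reinsertion of $v$. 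Second, even granting your lemma, one injection $\phi_s:U_t\to A_s$ per $s\in\{1,\dots,t\}$ yields only $t\card{U_t}\le\sum_{s=1}^{t}\card{A_s}$, i.e.\ $1-x_t\le(1/t)\sum_{s\le t}x_s$, which is strictly weaker than the required $(1/n)$ bound (for $t=1$ it gives $x_1\ge 1/2$ instead of $x_1\ge n/(n+1)$) and does not suffice to reach $1-1/e$.

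The factor $n$ has a different origin, which your counting cannot recover. The paper counts pairs: the map $(\lperm,v)\mapsto\lperm[v\mapsto t]$ is exactly $n$-to-one onto $\perms(\lparty)$, so $n\card{U_t}$ is the number of pairs $(\lperm,v)$ with $v$ unmatched in $\rank(\graph,\rperm,\lperm[v\mapsto t])$; \autoref{lem:rankunmatchedmoveto} maps each such pair into the event that $\matching(v)$ is matched under $\lperm$ to a vertex of rank at most $t$, and for fixed $\lperm$ the number of such $v$ is $\sum_{s\le t}\mathbf{1}[\nth{\lperm}{s}\text{ is matched}]$, giving $n\card{U_t}\le\sum_{s=1}^{t}\card{A_s}$. (The paper phrases this with the distributions $\bernoulli_t$, $\bernoulli'_t$, $\bernoulli''_t$ and the bijection induced by $\matching$.) Note also that the ``gap'' the paper fills is not located in this lemma at all: it is BM's Lemma~2 (\autoref{lem:lemma2}, on removing a vertex), whereas the structural ingredient needed here is one the paper \emph{simplifies} relative to BM.
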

Let $\lvertexgen \in \lparty$ be the vertex of rank $t$ for some fixed permutation $\lperm$ of $\lparty$.
The intuition behind this bound is that $v$ only remains unmatched if its partner $\matching(\lvertexgen)$ in the perfect matching is matched to a vertex ranked lower in $\rperm$.
Since $\lvertexgen$ is a random vertex (when drawing a permutation), so is $\matching(\lvertexgen)$.
The right-hand-side is supposed to be the probability that $\matching(\lvertexgen)$ is matched to a vertex arriving before $\lvertexgen$ (since the sum is the expected number of vertices matched to vertices of rank at most $t$).
This intuitive idea does not work due to the dependence of $\matching(\lvertexgen)$ and the set of vertices matched to vertices of rank at most $t$.
The correct argument avoids this dependence.
However, this requires a stronger statement on what happens with $\matching(\lvertexgen)$ if $\lvertexgen$ stays unmatched, captured in the following lemma (\cite[Lemma 4]{onlineMatchingSimple2008}), whose proof we discuss in the next section.
\begin{mylem}
  \label{lem:rankunmatchedmoveto}
  Let $\lvertexgen \in \lparty$, $\rvertexgen$ denote $\matching(v)$, and $\lperm \in \perms(V)$.
  If $\lvertexgen$ is not matched by $\rank(\graph, \lperm, \rperm)$ to $\rvertexgen$, then, for all $1 \leq i \leq n$, $\rvertexgen$ is matched by
  $\rank(\graph, \lperm[\vertexgen \mapsto i], \rperm)$ to a $\lvertexgen_i \in \lparty$ s.t.\ $\lperm[\lvertexgen \mapsto i](\lvertexgen_i) \leq \lperm(\lvertexgen)$.
\end{mylem}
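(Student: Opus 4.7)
The plan is to follow Birnbaum and Mathieu's strategy while filling in the step they dismiss as a ``simple structural observation.'' The central device is the zig-zag alternating path between the two matchings $\rank(\graph, \rperm, \lperm)$ and $\rank(\graph, \rperm, \lperm[\lvertexgen \mapsto i])$, traced by the $\zig$ and $\zag$ operations of Figure~\ref{fig:ranking}(j)--(k). The endpoint of this path on the right-party side will be $\rvertexgen$'s partner $\lvertexgen_i$ in the perturbed matching, and the desired rank bound should fall out as an invariant along the path.

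First I would introduce $\zig$ and $\zag$ as mutually recursive functions: a $\zag$ step at a left-party vertex $w$ follows $w$'s edge in the original matching to the right-party vertex it was matched to there, while a $\zig$ step at a right-party vertex $u'$ follows $u'$'s edge in the perturbed matching to its left-party partner. Starting the alternation at $\lvertexgen$, this produces a sequence whose edges alternate between the two matchings and whose vertices lie on their symmetric difference. To keep the recursion well-founded, I would recurse on a shrinking subgraph from which the already-visited vertices have been removed, mirroring how the formalised $\rank$ itself recurses on the tail of the ranking list. With these in hand, I would prove the key invariant: every left-party vertex $w$ on the path satisfies $\lperm[\lvertexgen \mapsto i](w) \leq \lperm(\lvertexgen)$. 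The base case uses the hypothesis that $\lvertexgen$ is not matched to $\rvertexgen$, so at $\rvertexgen$'s arrival the min-rank rule of $\rank$ either picks a strictly lower-ranked neighbour (closing the path immediately with the bound trivial) or finds no unmatched neighbour and opens the cascade. The inductive step exploits that the two runs agree on all vertices except $\lvertexgen$: whenever the cascade forces the reassignment of an online vertex $u'$, the $\min$-selection rule of $\rank$ applied in both runs pins down the newly chosen partner and bounds its rank by $\lperm(\lvertexgen)$. Termination of the path at a partner of $\rvertexgen$ in the perturbed matching then yields the desired $\lvertexgen_i$.

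I expect the main obstacle to be the rank invariant in the inductive step. The catch is that the move $\lvertexgen \mapsto i$ changes the ranks of \emph{many} other vertices by one, so the preferences of the online vertices visited by the cascade can flip in several different ways. A fully formal argument must enumerate the cases according to the sign of $i - \lperm(\lvertexgen)$, whether the current online vertex is adjacent to $\lvertexgen$ at all, and whether the cascade has already consumed $\lvertexgen$ itself, and align each case with the concrete recursive definition of $\rank$ from Listing~\ref{isa:rank}. A further subtlety is that well-foundedness of $\zig$ and $\zag$ is not quite obvious from the informal picture: one must show that the subgraph-shrinking measure strictly decreases even when the alternating path traverses vertices that play asymmetric roles in the two runs. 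This cascade of case distinctions is exactly what the ``simple structural observation'' hides, and, as the authors foreshadow in Section~1, is what makes the corresponding proof script disproportionately long.
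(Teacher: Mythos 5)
Your proposal reconstructs BM's own argument: build the $\zig$--$\zag$ alternating path forming the symmetric difference of $\rank(\graph,\rperm,\lperm)$ and $\rank(\graph,\rperm,\lperm[\lvertexgen\mapsto i])$, and prove that the ranks of the offline vertices it visits are bounded by $\lperm(\lvertexgen)$. That is a genuinely different route from the one the paper takes, and in fact it is the route the paper deliberately rejects for \autoref{lem:rankunmatchedmoveto}. The authors observe that formalising the alternating-path argument for this lemma would be as difficult as for Lemma~\ref{lem:lemma2}, and that there is no need to construct the difference of the two matchings at all just to bound the rank of the vertex $\lvertexgen_i$ to which $\rvertexgen$ is matched after the move. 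Instead the lemma follows almost immediately from the non-recursive specification \emph{ranking-matching} (Listing~\ref{isa:rank_matching}), whose two closure conditions pin down, for any matched edge, what must happen to lower-ranked neighbours of the online endpoint and earlier-arriving neighbours of the offline endpoint; the bound on $\lperm[\lvertexgen\mapsto i](\lvertexgen_i)$ is read off from these directly. The zig--zag machinery is reserved for \autoref{lem:lemma2}, where the full symmetric difference genuinely is needed. What the paper's shortcut buys is a short proof of the lemma that feeds the probabilistic analysis; what your route would buy, if completed, is a second instance of the structural cascade theorem, which is redundant here.

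On whether your route would actually go through: plausibly yes, but your proposal defers precisely the step that makes it expensive, so as written it has a gap. The rank invariant along the cascade is asserted to ``fall out'', yet the inductive step must cope with the fact that moving $\lvertexgen$ to rank $i$ shifts the ranks of every vertex between positions $i$ and $\lperm(\lvertexgen)$, and the case analysis you list at the end (sign of $i-\lperm(\lvertexgen)$, adjacency to $\lvertexgen$, whether the path has already consumed $\lvertexgen$) is the proof, not an appendix to it. There is also a reuse problem you underestimate: the paper's $\zig$, $\zag$, and \emph{shifts-to} are defined for the \emph{vertex-removal} perturbation of Lemma~\ref{lem:lemma2}, not for the \emph{rank-change} perturbation needed here, so the termination relation and the \emph{shifts-to} case analysis would have to be redeveloped for a different notion of shift rather than simply invoked. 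The paper's stated contribution for this particular lemma is exactly the observation that all of this can be skipped.
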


Before presenting the proof of \autoref{lem:ranktmatchedbound}, we need to consider how to formally
define $x_t$. It cannot be stated as a probability in the distribution
$\rankingprob(\graph, \rperm)$. There is no way to refer to the
``vertex of rank $t$ in the permutation $\lperm$'' since $\rankingprob(\graph, \rperm)$ is a distribution over
subgraphs of $\graph$ and the random permutations used to obtain them are
not accessible. The solution is to explicitly define the Bernoulli distribution
capturing the notion of the vertex of rank $t$ being matched.
\[
  \bernoulli_t \equiv \kdo\ \{\ \lperm \leftarrow \uniform(\perms(\lparty));\ \klet\ R =
  \rank(\graph, \rperm, \lperm);\ \return\ (\nth{\lperm}{t} \in \vertices(R))\ \}
\]
Then, $1 - x_t$ corresponds to the probability $\prob_{\bernoulli_t}(\falseA)$.

A key step to achieve the independence of the involved events revolves around
not only drawing a random permutation, but also drawing a random vertex and
moving it to rank $t$. This is reflected in the distribution $\bernoulli'_t$, given
in Fig.~\ref{fig:rankunmatchedmovetomonada}.
This deceptively simple change ensures the independence of the drawn permutation, i.e.\ $\lperm$, and the actual partner in the perfect matching of the vertex of rank $t$, i.e.\ $\matching(\nth{\sigma[\lvertexgen\mapsto t]}{t})$ which is the same as $\matching(\lvertexgen)$.
There is an aspect that is glossed over in the original proof and is intuitively clear: simply drawing a random permutation uniformly at random and the modified way where a random vertex is put at rank $t$ are equivalent.\todo{why are they equiv?}
This is shown explicitly in the formal proof.

\begin{figure}[t]
  \begin{subfigure}[t]{.45\textwidth}
    \begin{flalign*}
      \bernoulli'_t \equiv{} &\kdo\ \{\ &\\
                                    & \sigma \leftarrow \uniform(\perms(\lparty)); \\
                                    & \lvertexgen \leftarrow \uniform(\lparty);\\
                                    &\klet\ R = \rank(\graph, \pi, \sigma[\lvertexgen \mapsto t]);\\
                                    & \return\ (\lvertexgen \in \vertices(R))\\
      \}
    \end{flalign*}
    \subcaption{In addition to a random permutation
      $\lperm \in \perms(\lparty)$, a random vertex $\lvertexgen \in \lparty$ is
    drawn and moved to rank $t$.}
    \label{fig:rankunmatchedmovetomonada}
  \end{subfigure}
  \begin{subfigure}[t]{.45\textwidth}
    \begin{flalign*}
      \bernoulli''_t \equiv{} &\kdo\ \{\ &\\
                                     & \sigma \leftarrow \uniform(\perms(\lparty)); \\
                                     & \lvertexgen \leftarrow \uniform(\lparty); \\
                                     & \klet\ R = \rank(\graph, \pi, \sigma); \\
                                     & \return\ (\matching(\lvertexgen) \in \vertices(R) \wedge \sigma(R(\matching(\lvertexgen)) \leq t) )\\
      \}
    \end{flalign*}
    \subcaption{Distribution describing the probability that the partner
      $\matching(\lvertexgen) \in \rparty$ of a random vertex
      $\lvertexgen \in \lparty$ is matched to a vertex of rank at most $t$.}
    \label{fig:rankunmatchedmovetomonadb}
  \end{subfigure}
  \caption{Two Bernoulli distributions used in the proof of \autoref{lem:ranktmatchedbound}}
  \label{fig:rankunmatchedmovetomonads}
\end{figure}

The final distribution we present here, $\bernoulli''_t$ in
Fig.~\ref{fig:rankunmatchedmovetomonadb}, captures the probability that the
partner $\matching(\lvertexgen)$ of a random $\lvertexgen \in \lparty$ is
matched to a vertex of rank at most $t$.\footnote{The formalisations of the different distributions are in Listing~\ref{isa:do_programs}.}

\begin{proof}[Proof of \autoref{lem:ranktmatchedbound}]
  The first step follows from the fact that the permutation $\sigma$, in both $\bernoulli_t$ and $\bernoulli'_t$, and the vertex $\lvertexgen$ are all drawn from uniform distributions.
  \begin{align*}
    \prob_{\bernoulli_t}(\falseA) &= \prob_{\bernoulli'_t}(\falseA) \\
    \intertext{By \autoref{lem:rankunmatchedmoveto}, if $\lvertexgen \in \lparty$ is unmatched in $\rank(\graph, \rperm, \lperm[\lvertexgen \mapsto t])$,
    then, $\matching(\lvertexgen)$ is matched to a vertex of rank at most $t$ in $\rank(\graph, \rperm, \lperm)$ (by using $\lperm[\lvertexgen \mapsto t][\lvertexgen \mapsto \lperm(\lvertexgen)] = \lperm$).}
                                &\leq \prob_{\bernoulli''_t}(\trueA) \\
    \intertext{Then, the process of drawing a random $\lvertexgen \in \lparty$ and considering $\matching(\lvertexgen)$ in $\bernoulli''_t$
    can be replaced with drawing a random $\rvertexgen \in \rparty$ directly, using the bijection induced by $\matching$.
    This describes the probability that a random $\rvertexgen \in \rparty$ is matched to a vertex of rank at most $t$.
    That probability, in turn, is exactly the expected size of the set of online vertices matched to vertices of rank at most $t$.
    Formally, these two steps are performed by defining two more Bernoulli distributions capturing the involved concepts. Their
    definitions are omitted here. Let $\bernoulli^*_{t}$ be the distribution for the set of online vertices matched to vertices of rank at most $t$.
    }
                                &= \frac{1}{n} \mathbb{E}_{O \sim \bernoulli^*_t}[\card{O}] \\
    \intertext{The final step is to express the expected size of the set of online vertices matched to vertices of rank at most $t$
    as a sum of the probabilities that the offline vertices of rank up to $t$ are matched. This completes the argument.}
    &= \frac{1}{n} \sum_{s=1}^t \prob_{\bernoulli_s}(\trueA)
  \end{align*}
\end{proof}

Then, we proceed to the main result of this section.
\begin{mythm}
  \label{thm:rankcorrectperfect}
  The competitive ratio of $\rankingprob$ for instances with a perfect matching
  of size $n$ is at
  least $1-(1 - \frac{1}{n+1})^n$, \ie $1-(1 - \frac{1}{n+1})^n \leq \frac{\mathbb{E}_{R \sim \rankingprob(\graph, \rperm)} \left[ \card{R} \right]}{n}$.
\end{mythm}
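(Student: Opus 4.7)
The plan is to convert the per-rank bound of \autoref{lem:ranktmatchedbound} into an explicit lower bound on $S_n = \sum_{t=1}^n x_t$, and then to identify $S_n$ with the expected matching size.

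First, I would establish the identification $\mathbb{E}_{R \sim \rankingprob(\graph, \rperm)}[\card{R}] = \sum_{t=1}^n x_t$. Since $\graph$ is bipartite w.r.t.\ $\lparty$ and $\rparty$ and the output $R$ of \isaname{online\_match} is always a matching contained in $\graph$, every edge of $R$ has exactly one endpoint in $\lparty$, so $\card{R} = \card{\lparty \cap \vertices(R)}$ holds pointwise. Pushing the resulting sum of indicators through the expectation over $\sigma \sim \uniform(\perms(\lparty))$ then yields $\mathbb{E}[\card{R}] = \sum_{t=1}^n \prob[\nth{\sigma}{t} \in \vertices(R)] = \sum_{t=1}^n x_t$, where the summand at position $t$ is exactly $\prob_{\bernoulli_t}(\trueA) = x_t$.

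Next, I would unfold \autoref{lem:ranktmatchedbound} as a recurrence on the partial sums $S_t$. Substituting $x_t = S_t - S_{t-1}$ in $1 - x_t \leq S_t / n$ and clearing denominators gives $n - S_{t-1} \leq (n+1)\, x_t$, from which
\[
  S_t \;\geq\; S_{t-1} + \frac{n - S_{t-1}}{n+1} \;=\; \frac{n}{n+1}\,(S_{t-1} + 1).
\]
With $r = n/(n+1) = 1 - 1/(n+1)$ and $a_t = n - S_t$, this collapses to $a_t \leq r\, a_{t-1}$ with $a_0 = n$, so a one-line induction on $t$ yields $a_t \leq n r^t$, i.e.\ $S_t \geq n(1 - r^t)$. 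Specialising to $t = n$ and dividing by $n$ gives $\mathbb{E}[\card{R}]/n \geq 1 - (1 - 1/(n+1))^n$, as required.

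The hard part will be the first step rather than the algebra. Formally turning $\mathbb{E}[\card{R}]$ into $\sum_{t=1}^n x_t$ inside Isabelle requires unfolding the Giry-monadic definitions of $\rankingprob$ and of each $\bernoulli_t$, commuting a finite sum with a \isaname{pmf} expectation, and combining bipartiteness of $\graph$ with the invariant that \isaname{online\_match} produces a matching $R \subseteq \graph$. Once that identification is in place, the recurrence on $S_t$ and the resulting geometric-series estimate are a short induction, analogous to manipulations already appearing in the proof of \autoref{lem:ranktmatchedbound}.
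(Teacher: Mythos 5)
Your proposal is correct and follows essentially the same route as the paper: rewrite $\mathbb{E}[\card{R}]$ as $\sum_{t=1}^n x_t$, then use Lemma~\ref{lem:ranktmatchedbound} and an induction on partial sums to obtain the geometric lower bound $S_t \geq n\bigl(1-(1-\tfrac{1}{n+1})^t\bigr)$, which at $t=n$ gives the claim. Your substitution $a_t = n - S_t$ is just an explicit working-out of the ``fact on sums provable by induction'' that the paper leaves implicit, so there is no substantive difference.
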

\begin{proof}
  The expected size of the matching produced by $\rankingprob(\graph, \rperm)$ can
  be rewritten as a sum of the probabilities of the vertices of some rank
  getting matched.
  \begin{align*}
    \frac{\mathbb{E}_{R \sim \rankingprob(\graph, \rperm)}\left[ \card{R} \right]}{n} &=
    \frac{1}{n} \sum_{s=1}^n \prob_{\bernoulli_s}(\trueA) \\
    \intertext{The bound obtained on $\prob_{\bernoulli_s}(\falseA)$ for $1 \leq s \leq n$ in \autoref{lem:ranktmatchedbound} can be used
    to bound the sum. This requires a fact on sums provable by induction on $n$, followed by algebraic manipulation.}
                                                                                                 &\geq \frac{1}{n} \sum_{s=1}^n \left( 1 - \frac{1}{n+1} \right)^s \\
    \intertext{More algebraic manipulation yields the final result.}
                                                                                                 &= 1 - \left( 1 - \frac{1}{n+1} \right)^n
  \end{align*}
\end{proof}


\section{Lifting the Competitiveness to General Bi-Partite Graphs}

Until now, we have shown that \rankingprob\ satisfies the desired competitive
ratio for graphs with a perfect matching.
Also, until now, our formalisation closely follows BM's  proof.
However, in all previous graph-theoretic expositions of the correctness proof of this algorithm~\cite{GoelMehtaOnlineMatching2008,onlineMatchingSimple2008,karpOptimalAlgorithmOnline1990}, as opposed to linear programming-based  expositions~\cite{devanurRandomizedPrimalDualAnalysis2013,onlineMatchingEcon,vaziraniOnlineMatching2022}, the authors would stop at the current point, stating, or implicitly assuming, that it is obvious to see how the analysis of \rankingprob\ for bipartite graphs with perfect matchings extends to general bipartite graphs.
The central argument is as follows: it is easy to see that, for a fixed permutation of the offline vertices, if we remove a vertex from a bipartite graph that does not occur in a maximum matching of that graph, then \rank\ will compute a matching that is either one edge smaller or of the same size as the matching \rank\ would compute, given the original graph.

Indeed, BM, who are the authors who give the most detailed account of the graph-theoretic correctness proof of this algorithm, state, as a proof for this fact~\cite[Lemma~2]{onlineMatchingSimple2008}, that ``it is an easy structural observation''.
In a sense they are correct: in our example, illustrated in Fig.~\ref{fig:ranking}, if we remove $\rvertexb$, it is easy to see that \rank's output size will be only one less than on the original graph.
This is because all the matching edges will ``cascade'' down.
This is illustrated in Fig.~\ref{fig:online_g}, showing the blue edges being replaced with the red edges.
In this section we mainly formalise this argument.
We also formalise another easier, but no less crucial, graph-theoretic part of the proof by BM~\cite[Lemma 4]{onlineMatchingSimple2008}.
This lemma is used in the probabilistic part of the proof, as stated earlier.
In our formalisation we significantly simplified the proof.
Before we do so, however, we introduce some necessary background and notions related to paths.

\subsection{Alternating Paths, Augmenting Paths, and Berge's Lemma}

A list of vertices $[\vertexa,\vertexb,\dots,\vertexgen_n]$ is a path w.r.t.\ a
graph $\graph$ iff $\{\vertexgen_i, \vertexgen_{i+1}\}\in\graph$ for $1 \leq i < n$.
Note: a path $[\vertexa\vertexb\dots\vertexgen_n]$ is always a simple path as we only consider distinct lists.
A list of vertices $[\vertexa,\vertexb,\dots,\vertexgen_n]$ is an alternating path w.r.t.\ a set of edges $\edges$ iff for some $\edges'$ \begin{enumerate*}\item $\edges' = \edges$ or $\edges' \cap \edges=\emptyset$, \item $\{\vertexgen_i,\vertexgen_{i+1}\}\in\edges'$ holds for all even numbers $i$, where $1 \leq i < n$, and \item $\{\vertexgen_i,\vertexgen_{i+1}\}\not\in\edges'$ holds for all odd numbers $i$, where $1 \leq i \leq n$.\end{enumerate*}
We call a list of vertices $[\vertexa,\vertexb,\dots,\vertexgen_n]$ an augmenting path w.r.t. a matching $\matching$ iff $[\vertexa,\vertexb,\dots,\vertexgen_n]$ is an alternating path w.r.t. $\matching$ and $\vertexa,\vertexgen_n\not\in\vertices(\matching)$.
If $\matching$ is a matching w.r.t.\ a graph $\graph$, we call the path an augmenting path w.r.t. to the pair $\langle\graph,\matching\rangle$.
Also, for two sets $s$ and $t$, $s \oplus t$ denotes the symmetric difference of the two sets.

A central result in matching theory is Berge's lemma, which gives an
algorithmically useful characterisation of a maximum cardiniality matching.
\begin{mythm}[Berge's Lemma]
\label{thm:berge}
For a graph $\graph$, a matching $\matching$ is maximum w.r.t.\ $\graph$ iff there is not an augmenting path $\path$ w.r.t.\ $\langle \graph, \matching\rangle$.
\end{mythm}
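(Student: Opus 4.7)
I would prove the two directions of the biconditional independently, in each case by contraposition. The forward direction (``if $\matching$ is maximum then there is no augmenting path'') uses an augmenting path to build a strictly larger matching: assume for contradiction that $\path = [\vertexa,\dots,\vertexgen_n]$ is an augmenting path w.r.t.\ $\langle\graph,\matching\rangle$, and let $E(\path) = \{\{\vertexgen_i,\vertexgen_{i+1}\}\mid 1\leq i < n\}$ be its edge set. Because $\path$ alternates and both endpoints lie outside $\vertices(\matching)$, the first and last edges of $\path$ must be non-matching, the edges of $\path$ in $\matching$ are exactly those at the ``even'' positions, and $\path$ carries exactly one more non-matching edge than matching edges. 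Setting $\matching^+ = \matching \oplus E(\path)$, a short per-vertex case analysis shows $\matching^+$ is still a matching w.r.t.\ $\graph$: internal vertices of $\path$ keep exactly one incident edge (the path edge previously not in $\matching$), the two endpoints each gain one new incident edge, and vertices outside $\path$ are untouched. Hence $|\matching^+| = |\matching|+1$, contradicting maximality.

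\textbf{Backward direction.} Contrapositively, I would assume a matching $\matching'$ w.r.t.\ $\graph$ with $|\matching'| > |\matching|$ and produce an augmenting path. The tool is the symmetric difference $H = \matching \oplus \matching'$. Each vertex is incident to at most one edge of each of the two matchings, so every vertex has degree at most $2$ in $H$, and each connected component of $H$ is therefore a simple path or a simple cycle whose edges alternate between $\matching \setminus \matching'$ and $\matching' \setminus \matching$ (two consecutive edges from the same matching would share a vertex). Cycles, and paths of even length, contain equal numbers of edges of each type, so since $|\matching' \setminus \matching| > |\matching \setminus \matching'|$, at least one component must be a path carrying strictly more $\matching'$-edges than $\matching$-edges. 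Such a path starts and ends with a $\matching'$-edge, so its endpoints are missed by $\matching$, and it is the desired augmenting path w.r.t.\ $\langle\graph,\matching\rangle$.

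\textbf{Main obstacle.} The hard part will be the structural decomposition underlying the backward direction: the statement ``any subgraph of maximum degree two decomposes into simple paths and simple cycles'' is visually obvious but nontrivial to formalise from scratch, since it requires a well-founded extension argument on walks. A cleaner formal route would be to reason locally rather than globally: pick $\lvertexgen$ incident to an edge of $\matching' \setminus \matching$ (which exists by the cardinality hypothesis) and greedily extend a maximal simple alternating walk in $H$ through $\lvertexgen$ in both directions by induction on the number of still-unused $H$-edges; the extension must terminate at vertices missed by $\matching$, or else the next $\matching$-edge incident to the current endpoint would still be available and could be appended. Since the present formalisation builds on the Abdulaziz et al.\ matching library developed for Edmonds' blossom algorithm, Berge's lemma is very likely already proved there, in which case the real work reduces to reconciling that library's notions of ``path'' and ``alternating path'' with the ones fixed in this paper.
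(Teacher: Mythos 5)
Your argument is the standard textbook proof of Berge's lemma and is mathematically sound in both directions: augmenting along $\path$ via the symmetric difference to contradict maximality, and decomposing $\matching \oplus \matching'$ into alternating paths and cycles to extract an augmenting path when $\matching$ is not maximum. However, the paper itself offers no proof of this theorem at all: it states Berge's lemma purely as background and imports it wholesale from the pre-existing formalisation of Abdulaziz et al.\ (the \isaname{Berge\_1} lemma shown in the appendix listing), so the only ``work'' the paper does here is exactly what your final paragraph anticipates --- reusing the library and reconciling its notions of path, alternating list, and augmenting path with the ones used elsewhere. Your standalone proof is therefore more than the paper provides; the one place where your instincts are especially on target is the ``main obstacle'' you identify, since the degree-at-most-two decomposition (or the local walk-extension argument you propose as an alternative) is indeed the part that makes a from-scratch formalisation of the backward direction costly, and is precisely why the paper leans on the existing library rather than re-proving the lemma.
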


We use a formalisation of the above concepts and Berge's Lemma by Abdulaziz et al.~\cite{DBLP:conf/mfcs/AbdulazizMN19}.
For completeness, the most important parts of this formalisation are demonstrated in Listing~\ref{isa:paths}.
Nonetheless, interested readers should refer to the original paper~\cite{DBLP:conf/mfcs/AbdulazizMN19}.

\subsection{\rank's Behaviour after Removing a Vertex}

Now that we have all the necessary machinery, we can discuss the formalisation of the correctness of \rankingprob\ for general bipartite graphs.
The central claim to show is stated in the following lemma, which is a restatement of Lemma~2 in BM's paper.
It states what happens to the result of \rank\ when a vertex is removed from the graph.

\newcommand{\graphsmall}{\ensuremath{\mathcal{H}}}
\begin{mylem}
\label{lem:lemma2}
Let $\graph$ be a bipartite graph w.r.t.\  the lists $\lperm$ and $\rperm$.
Consider a vertex $\rvertexgen\in\rperm$.
Let $\graphsmall$ be $G \setminus \{\rvertexgen\}$.
We have that either
$\rank(\graph,\rperm,\lperm) = \rank(\graphsmall,\rperm,\lperm)$ or
$\rank(\graph,\rperm,\lperm)\oplus\rank(\graphsmall,\rperm,\lperm)$ can be
ordered into an
alternating path w.r.t.\ $\rank(\graph,\rperm,\lperm)$ and w.r.t.\ $\rank(\graphsmall,\rperm,\lperm)$, and that path starts at $\lvertexgen$.
\end{mylem}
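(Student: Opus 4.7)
The plan is to proceed by induction on the list $\rperm$ of online arrivals, tracking how the two executions of $\rank$ diverge after the removal of $u$. The intuition, made visible in Fig.~\ref{fig:online_zig} and Fig.~\ref{fig:online_zag}, is that removing $u$ initiates a single ``cascade'' of re-matchings: the offline partner of $u$ in $\rank(\graph,\rperm,\lperm)$, call it $v$, becomes available at the next online arrival $u'$ that could claim it in $\rank(\graphsmall,\rperm,\lperm)$; this in turn frees the offline vertex that $u'$ was matched to in $\graph$; and so on. The alternating path traced by this cascade, with $u$ as its first vertex and $v$ as its second, is exactly the symmetric difference of the two matchings.

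To make the cascade concrete I would introduce two mutually recursive auxiliary functions $\zig$ on online vertices and $\zag$ on offline vertices, as suggested by the labels in the figure. Starting at $u$, $\zig(u)$ yields the $\graph$-matching edge $\{u,v\}$; then $\zag(v)$ identifies the online vertex $u'$ that grabs $v$ in $\graphsmall$ and yields the $\graphsmall$-matching edge $\{v,u'\}$; then $\zig(u')$ yields the $\graph$-matching edge from $u'$; and so on. The recursion terminates at a $\zag$ when the current offline vertex has no online replacement in $\graphsmall$, or at a $\zig$ when the current online vertex was already unmatched in $\graph$.

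The core invariant I would prove by induction on the prefix of $\rperm$ already processed is that, after processing the first $k$ online arrivals in both executions, the two partial matchings agree on every edge except along the current prefix of the $\zig/\zag$ path, with their symmetric difference equal to exactly that prefix. Each inductive step is a short case analysis: either the $(k{+}1)$-th arrival is the next $\zag$ vertex in the cascade, in which case the symmetric difference grows by two edges as the cascade propagates; or it is any other vertex, in which case both executions make the same greedy choice in \isaname{step} and the invariant is preserved unchanged.

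The main obstacle, and the essence of what BM dismiss as an ``easy structural observation'', is proving that the cascade actually produces an alternating, \emph{simple} path with respect to both matchings simultaneously. Two facts are particularly delicate: first, that each $\zig$ step moves strictly later in the arrival order of $\rperm$ and each $\zag$ step moves strictly to a higher-ranked offline vertex (by the greedy rule of \isaname{step}), so that no vertex along the path is revisited; second, that no additional edges of either matching are incident to the interior vertices of the path, which is what lets the symmetric difference be exactly the path rather than some larger subgraph. Once the invariant is established, running it to the end of $\rperm$ yields the disjunction of the lemma: either the cascade never started (because $u$ was already unmatched in $\graph$) and the two matchings coincide, or the cascade traces out the claimed alternating path.
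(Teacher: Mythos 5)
Your proposal is correct in outline and arrives at the same structural object as the paper --- the alternating $\zig$/$\zag$ cascade starting at the removed vertex --- but it proves the lemma by a genuinely different induction. You do a lockstep simulation: induction on the processed prefix of $\rperm$, with the invariant that the two partial matchings differ by exactly the current prefix of the cascade (equivalently, that the free offline sets differ by exactly the current cascade tail). The paper instead defines the path \emph{declaratively}: $\shiftsto$ and $\zig$/$\zag$ are phrased purely in terms of the fixed graph, matching and orders, never in terms of ``the vertex that grabs $v$ in $\graphsmall$'s execution''; it then abstracts the algorithm's output into the non-recursive \rankmatching{} specification (Proposition~\ref{prop:rankingspec}) and proves Lemma~\ref{lem:lemma2abs} by strong induction on $\cardinality{\graph}$, deleting the removed vertex together with its partner and swapping the roles of the online and offline sides via the symmetry lemma (Lemma~\ref{lem:zigsymm}). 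The paper explicitly reports choosing this route \emph{because} the operational characterisation you propose forces the graph (or the execution state) to change across recursive calls, which makes the induction hard to manage formally. What your approach buys is conceptual transparency --- it directly mirrors the ``cascade'' picture and needs no symmetry argument; what the paper's approach buys is that the two delicate facts you correctly single out (ranks strictly increase along the path, so it is simple; no matching edge touches its interior, so the symmetric difference is \emph{exactly} the path) fall out of the abstract \rankmatching{} properties rather than having to be threaded through an execution invariant. Be aware that your ``each inductive step is a short case analysis'' is precisely where the paper reports the bulk of the difficulty lies, so on paper your argument is fine, but you should not expect those case analyses to be short.
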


The above lemma was never proved by any of the previous expositions of the combinatorial argument for the algorithm's correctness.
BM's exposition is an exception, where there is at least a graphical example, showing what happens when we remove a vertex before running \rank.
A version of that graphical argument can be seen in Fig.~\ref{fig:ranking}.Fig.~\ref{fig:online_f} shows the matching computed by the algorithm on the original graph, and Fig.~\ref{fig:online_g} shows the difference in the computed matching if a vertex from the online side of the graph is removed. \footnote{The the lemma above is stated for an online vertex being removed, while in the formalisation an offline vertex is removed. This highlights an important concept in many of the proofs: the interchangeability of the offline and online vertices for fixed orders $\lperm$ and $\rperm$.}
As shown, when the vertex is removed, the matched edges ``cascade downwards'', where the original matching edges, shown in blue, are replaced with the red edges.
The statement of the lemma states that the symmetric difference between the two computed matchings is always an alternating path, w.r.t.\ both the old and the new matchings, if there is any difference.
When looking at the graphical illustration this is obvious.
However, when formalising that argument, many challenges manifest themselves.

The first challenge is the characterisation of the path that constitutes the difference between the two matchings.
This characterisation has to, among other things, make formal proofs by induction manageable.
To do so, we had to formulate this characterisation \emph{not} recursively on the given bipartite graph, i.e.\ the given bipartite graph should not change across different recursive calls.
Otherwise, proving anything about the path would involve a complicated induction on the given bipartite graph.

\begin{figure*}[t]
\begin{IsabelleSnippet}[label=isa:shifts_to]{Formalising \shiftsto\ in Isabelle/HOL}
definition "shifts_to G M u v v' \<pi> \<sigma> \<equiv>
 u \<in> set \<pi> \<and> v' \<in> set \<sigma> \<and> index \<sigma> v < index \<sigma> v' \<and> {u,v'} \<in> G
 \<and> (\<nexists>u'. index \<pi> u' < index \<pi> u \<and> {u',v'} \<in> M) \<and>
 (\<forall>v''. (index \<sigma> v < index \<sigma> v'' \<and> index \<sigma> v'' < index \<sigma> v')
 \<longrightarrow> ({u,v''} \<notin> G \<or> (\<exists>u'. index \<pi> u' < index \<pi> u \<and> {u',v''} \<in> M)))
\end{IsabelleSnippet}
\end{figure*}

To define that path, we first introduce a concept relating two vertices on the online side.
We state $\lvertexgen$ \emph{\shiftsto} $\lvertexgen'$
iff \begin{enumerate*}\item $\lvertexgen$ occurs before $\vertexgen'$ in the
  offline permutation $\lperm$, \item $\lvertexgen$ is matched to some
  $\rvertexgen$, \item $\lvertexgen'$ is not matched to any vertex that occurs
  before $\rvertexgen$ in $\rperm$, and \item any vertex
  $\lvertexgen''\in\neighb{\graph}{\rvertexgen}$ occurring between $\lvertexgen$
  and $\lvertexgen'$ in $\lperm$ is matched by \rank\ to a vertex occurring
  before $\rvertexgen$ in the arrival order $\rperm$.\end{enumerate*}
Intuitively, this means that, if $\vertexgen$ is removed from the graph, then $\vertexgen'$ would be matched to $\rvertexgen$ by \rank.
Our formalisation of this definition can found in Listing~\ref{isa:shifts_to}.
Note: the omitted arguments in the text, $\graph$, $\matching$, $\rperm$, and $\lperm$ are usually clear from the context.

Now that we are done with the definition of \shiftsto, we are ready to describe our characterisation of the path whose edges form the symmetric difference of the two matchings computed by \rank.
We characterise it using the following functions:

\begin{align*}
  \zig(\graph,\matching,\lvertexgen,\rperm,\lperm) &\equiv
                                                     \begin{cases}
                                                       \lvertexgen \concat \zag(\graph, \matching, \rvertexgen, \rperm, \lperm) & \text{if } \{\lvertexgen,\rvertexgen\} \in \matching \\
                                                       [\lvertexgen] & \text{otherwise}
                                                     \end{cases} \\
  \zag(\graph,\matching,\rvertexgen,\rperm,\lperm) &\equiv
                                                     \begin{cases}
                                                       \rvertexgen \concat \zag(\graph, \matching, \lvertexgen', \rperm, \lperm) & \text{if } \{\lvertexgen,\rvertexgen\} \in \matching \text{, for some $\lvertexgen$, and } \lvertexgen\ \shiftsto\ \lvertexgen' \\
                                                       [\rvertexgen] & \text{otherwise}
                                                     \end{cases}
\end{align*}

\begin{figure*}[t]
\begin{IsabelleSnippet}[label=isa:zig_zag]{Formalising zig-zag and their termination relation in Isabelle/HOL.}
function zig :: "'a graph \<Rightarrow> 'a graph \<Rightarrow> 'a \<Rightarrow> 'a list \<Rightarrow> 'a list \<Rightarrow> 'a list"
and zag :: "'a graph \<Rightarrow> 'a graph \<Rightarrow> 'a \<Rightarrow> 'a list \<Rightarrow> 'a list \<Rightarrow> 'a list" where
  proper_zig: "zig G M v \<pi> \<sigma> = v # (
                    if \<exists>u. {u,v} \<in> M 
                    then zag G M (THE u. {u,v} \<in> M) \<pi> \<sigma>
                    else [])" if "matching M"
| no_matching_zig: "zig _ M v _ _ = [v]" if "\<not>matching M"

| proper_zag: "zag G M u \<pi> \<sigma> =  u # (if \<exists>v. {u,v} \<in> M
                      then 
                      (let v = THE v. {u,v} \<in> M in (
                        if \<exists>v'. shifts_to G M u v v' \<pi> \<sigma>
                        then zig G M (THE v'. shifts_to G M u v v' \<pi> \<sigma>) \<pi> \<sigma>
                        else [])
                      )
                      else []
                    )" if "matching M"
| no_matching_zag: "zag _ M v _ _ = [v]" if "\<not>matching M"

\end{IsabelleSnippet}
\end{figure*}


As the names of the functions indicate, the path zig-zags between the online and the offline sides of the graph, going down the online ordering.
This is indicated in Fig.~\ref{fig:online_zig}. 
The formalisation of $\zig$-$\zag$ is given in Listing~\ref{isa:zig_zag}. 
Note that the formalisation has extra cases for when the second argument is not a matching: this is to ensure termination, which is not straightforward, as the definite descriptions are not well-defined in these cases.
The termination relation encodes the intuition that, while zig-zagging, the path also goes down the ordering of online vertices.
More formally, because this is a mutually recursive function, we have to provide an order that relates the argument passed to recursive calls of \zag\ from \zig\ and the other way around.
For evaluating $\zig(\graph,\matching,\lvertexgen,\rperm,\lperm)$, we need a
call to $\zag(\graph,\matching,\rvertexgen,\rperm,\lperm)$, in which case the
relation holds iff $\lvertexgen$ and $\rvertexgen$
satisfy \begin{enumerate*}\item
  $\{\lvertexgen,\rvertexgen\}\in\rank(\graph,\rperm,\lperm)$ and \item if there
  is $\lvertexgen'$, s.t.\ $\lvertexgen$ \shiftsto\ $\lvertexgen'$, then
  $\lperm(\lvertexgen) < \lperm(\lvertexgen')$\end{enumerate*}.
For evaluating $\zag(\graph,\matching,\rvertexgen,\rperm,\lperm)$, we need a
call to $\zig(\graph,\matching,\lvertexgen',\rperm,\lperm)$, in which case the
relation holds iff $\rvertexgen$ and $\lvertexgen'$
satisfy \begin{enumerate*}\item there is $\lvertexgen$ s.t.\
  $\{\lvertexgen,\rvertexgen\}\in\rank(\graph,\rperm,\lperm)$ and \item
  $\lvertexgen$ \shiftsto\ $\lvertexgen'$ and $\lperm(\lvertexgen) < \lperm(\lvertexgen')$\end{enumerate*}.

\newcommand{\rankmatching}{\textit{ranking-matching}}
\begin{figure*}[t]
\begin{IsabelleSnippet}[label=isa:rank_matching]{Formalising the specification of \rank's output in Isabelle/HOL.}
definition ranking_matching :: "'a graph \<Rightarrow> 'a graph \<Rightarrow> 'a list \<Rightarrow> 'a list \<Rightarrow> bool" where
  "ranking_matching G M \<pi> \<sigma> \<equiv> graph_matching G M \<and>
    bipartite G (set \<pi>) (set \<sigma>) \<and> maximal_matching G M \<and>
    (\<forall>u v v'. ({u,v}\<in>M \<and> {u,v'}\<in>G \<and> index \<sigma> v' < index \<sigma> v) \<longrightarrow>
      (\<exists>u'. {u',v'}\<in>M \<and> index \<pi> u' < index \<pi> u)) \<and>
    (\<forall>u v u'. ({u,v}\<in>M \<and> {u',v}\<in>G \<and> index \<pi> u' < index \<pi> u) \<longrightarrow>
      (\<exists>v'. {u',v'}\<in>M \<and> index \<sigma> v' < index \<sigma> v))"
\end{IsabelleSnippet}
\end{figure*}

Another challenge for formalising the proof of Lemma~\ref{lem:lemma2} is devising a non-recursive characterisation of the properties of the matching computed by \rank, which would be enough for proving the lemma, yet more abstract than the actual specification of the algorithm.
This characterisation can be intuitively described as follows: $\matching$ is a \rankmatching\ w.r.t.\ $\graph$, $\lperm$, and
$\rperm$ iff \begin{enumerate*}\item $\graph$ is bipartite w.r.t.\ $\lperm$
  and $\rperm$, \item $\matching$ is a maximal matching w.r.t.\
  $\graph$, \item every vertex from $\rvertexgen \in \rperm$ is matched to the unmatched
  vertex from $\lperm$ at $\rvertexgen$'s arrival, to which it is connected,
  with the lowest rank in $\lperm$, and \item no vertex from $\lperm$ ``refuses'' to be matched.\end{enumerate*}
The formal specification is given in Listing~\ref{isa:rank_matching}.
It should be clear that the following properties hold for \rankmatching.
\begin{myprop}
\label{prop:rankingspec}
Let $\graph$ be a bipartite graph w.r.t.\ $\lperm$ and $\rperm$.
We have that \begin{enumerate*} \item $\rank(\graph,\rperm,\lperm)$ is a \rankmatching\ w.r.t.\ $\graph$, $\lperm$, and $\rperm$, \item if $\matching$ is a \rankmatching\ w.r.t.\ $\graph$, $\lperm$, and $\rperm$, then it is a \rankmatching\ w.r.t.\ $\graph$, $\rperm$, and $\lperm$, and \item if $\matching$ and $\matching'$ are both \rankmatching s w.r.t.\ $\graph$, $\lperm$, and $\rperm$, then $\matching=\matching'$.\end{enumerate*}
\end{myprop}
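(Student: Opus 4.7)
For part (i), the plan is to prove each of the five conjuncts of \rankmatching\ by induction on the online arrival sequence $\rperm$, with the strengthened hypothesis that after processing any prefix of $\rperm$ the current matching is a ranking-matching for the subgraph of $\graph$ restricted to the arrivals seen so far. The matching, bipartiteness, and maximality conjuncts follow straightforwardly from the guard in \isaname{step}: a newly arriving $\rvertexgen$ is matched whenever a free neighbour exists, and an offline $\lvertexgen$ can remain unmatched only if every neighbour was taken at the moment of its own arrival. For the offline-rank forall clause, suppose \isaname{step} matches $\rvertexgen$ to $\lvertexgen$ and $\{\rvertexgen,\lvertexgen'\}\in\graph$ with $\lperm(\lvertexgen') < \lperm(\lvertexgen)$; since \isaname{step} walks $\lperm$ in rank order and preferred $\lvertexgen$, the vertex $\lvertexgen'$ must already have been matched, hence to some earlier-arriving $\rvertexgen'$. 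For the online-rank forall clause, if $\rperm(\rvertexgen') < \rperm(\rvertexgen)$ and $\{\rvertexgen',\lvertexgen\}\in\graph$, then at step $\rvertexgen'$ the vertex $\lvertexgen$ was still free, so \isaname{step} matched $\rvertexgen'$ to some $\lvertexgen'$ of strictly smaller offline rank than $\lvertexgen$ (strict because $\lvertexgen$ itself was available but not chosen).

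Part (ii) is essentially syntactic. Bipartiteness of $\graph$ with respect to two parties is symmetric in them, while the two universally quantified clauses of the specification are exact mirror images of one another under the renaming $\rvertexgen \leftrightarrow \lvertexgen$, $\rvertexgen' \leftrightarrow \lvertexgen'$, $\rperm \leftrightarrow \lperm$. Swapping $\lperm$ and $\rperm$ therefore merely permutes the two clauses and leaves their conjunction invariant.

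For part (iii) I would argue by contradiction. Suppose $\matching \neq \matching'$ are both \rankmatching s and pick an edge $\{\rvertexgen,\lvertexgen\} \in \matching \oplus \matching'$ minimising the online rank $\rperm(\rvertexgen)$; without loss of generality $\{\rvertexgen,\lvertexgen\} \in \matching \setminus \matching'$. Maximality of $\matching'$ together with $\{\rvertexgen,\lvertexgen\} \in \graph$ forces $\rvertexgen$ or $\lvertexgen$ to be matched in $\matching'$. If $\rvertexgen$ is matched in $\matching'$ to some $\lvertexgen' \neq \lvertexgen$, I apply the first forall clause to whichever of $\matching, \matching'$ contains the edge whose offline endpoint has the larger $\lperm$-rank among $\{\lvertexgen,\lvertexgen'\}$; this hands back a third edge (involving either $\lvertexgen$ or $\lvertexgen'$) that lies in $\matching \oplus \matching'$ and has online rank strictly below $\rperm(\rvertexgen)$, contradicting minimality. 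Otherwise $\rvertexgen$ is unmatched in $\matching'$ and $\lvertexgen$ is matched in $\matching'$ to some $\rvertexgen'$; minimality forces $\rperm(\rvertexgen) < \rperm(\rvertexgen')$, and the second forall clause applied to $\matching'$ with the edge $\{\rvertexgen',\lvertexgen\}$ and the neighbour $\rvertexgen$ of $\lvertexgen$ then demands a partner of $\rvertexgen$ in $\matching'$, contradicting the assumption. The main obstacle is precisely this case analysis in (iii): in each sub-case one has to verify that the produced edge genuinely lies in $\matching \oplus \matching'$ and has strictly smaller online rank, and this bookkeeping (together with the symmetry between the online and offline sides noted in (ii)) is where most of the Isabelle effort goes, whereas (i) is routine structural induction and (ii) is nearly immediate.
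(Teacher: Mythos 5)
Your proposal is correct. Note that the paper itself gives no proof of this proposition at all --- it is introduced with ``It should be clear that the following properties hold'' and the argument is relegated entirely to the formalisation --- so there is no textual proof to compare against. The three arguments you give are the natural ones and almost certainly mirror what the formal scripts do: part (i) by induction along the online arrival list using the greedy guard in \isaname{step}, part (ii) by observing that the two universally quantified clauses of \rankmatching\ are mirror images under swapping the parties, and part (iii) by an extremal argument on the edge of $\matching\oplus\matching'$ whose online endpoint has minimal rank in $\rperm$. Your case analysis in (iii) is complete (maximality rules out both endpoints being free in $\matching'$, and each remaining sub-case produces an edge of the symmetric difference with strictly smaller online rank or an immediate contradiction), and the strictness claims you flag are justified by the matching property. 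The only cosmetic slip is the phrase ``at the moment of its own arrival'' for an offline vertex, which does not arrive; what you mean, correctly, is that if an offline neighbour of an arriving $\rvertexgen$ is still free then $\rvertexgen$ gets matched, which is what maximality needs.
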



This specification of \rank\ makes our proofs about \rank\ much simpler,
as it allows us to gloss over many of the computational details of the algorithm.
In particulate, it allows us to avoid nested inductions, especially when using the I.H.\ of Lemma~\ref{lem:lemma2}.

Now that we have characterised the difference between the
  matchings computed by $\rank$ before and after removing a vertex, as well as the main properties satisfied by matchings computed by \rank, we are ready to present the proof that the competitiveness for bipartite graphs with perfect matchings lifts to general bipartite graphs.
\newcommand{\ihvar}[1]{\overline{#1}}
There are two main ideas to our proof.
The first one is that we show that the output of $\zig$, for some online vertex
$\rvertexgen$, which is matched to an offline vertex $\lvertexgen$, stays the
same when offline vertices are removed from the graph and the matching, if those
offline vertices are all ranked lower than $\lvertexgen$.
Graphically, this is clear.
For instance, in  Fig.~\ref{fig:online_zig}, if we remove the vertex $\lvertexa$ from the graph and the matching, the result of \zig\ applied to $\rvertexb$, w.r.t.\ to the modified graph and matching, will be the same as its output w.r.t.\ the old graph and matching.

\begin{mylem}
\label{lem:remove_verts_zig_zag}
Let $\graph$ be a bipartite graph w.r.t.\  $\lperm$ and $\rperm$.
Consider a vertex $\rvertexgen\in\rperm$, s.t.\ there is $\lvertexgen$, where $\{\lvertexgen,\rvertexgen\}\in\matching$.
Consider a set of vertices $\rparty'\subseteq\rperm$, s.t.\ for all
$\rvertexgen'\in\rparty'$ we have that $\rperm(\rvertexgen') < \rperm(\rvertexgen)$.\todo{assumption to strong, bipartite matching sufficient}
Let $\matching$ be a \rankmatching\ w.r.t.\ $\graph$, $\rperm$, and $\lperm$.
We have that $\zig(\graph,\matching,\rvertexgen,\lperm,\rperm)=\zig(\graph\setminus\rparty',\matching\setminus\rparty',\rvertexgen,\lperm,\rperm)$ and $\zag(\graph,\matching,\lvertexgen,\lperm,\rperm)=\zag(\graph\setminus\rparty',\matching\setminus\rparty',\lvertexgen,\rperm,\lperm)$.
\end{mylem}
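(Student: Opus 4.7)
My approach is a simultaneous well-founded induction along the termination relation that justifies the mutually recursive $\zig$ and $\zag$ in Listing~\ref{isa:zig_zag}. Because the lemma uses the permutation arguments in the swapped order $\lperm, \rperm$, I first appeal to Proposition~\ref{prop:rankingspec}(2) to treat $\matching$ as a ranking-matching in which $\rperm$ plays the role of the offline ranking; this turns $\rvertexgen$ into the starting ``offline-side'' vertex of the zig-zag and $\rparty'$ into a set of vertices on that same side, each of smaller $\rperm$-rank than $\rvertexgen$.

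The invariant I would strengthen the induction with is: every vertex from $\rperm$ encountered along the zig-zag has $\rperm$-rank at least $\rperm(\rvertexgen)$, and therefore lies outside $\rparty'$. This holds initially at $\rvertexgen$, and is preserved because the only step that advances a right-side vertex is the $\shiftsto$-step inside $\zag$, which enforces strictly larger $\rperm$-rank for the successor. Vertices from $\lperm$ traversed along the way are trivially safe, since $\lperm$ and $\rparty' \subseteq \rperm$ are disjoint by bipartiteness.

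Under this invariant each step reduces to two sub-facts. First, a matching edge $\{\lvertexgen, \rvertexgen\}$ taken in a $\zig$ step survives the restriction, since neither endpoint lies in $\rparty'$; in particular the $\lvertexgen$ picked by $\zig$ in both instances coincides. Second, and this is the main obstacle, the successor $v'$ selected by $\shiftsto$ inside $\zag$ is the same in $(\graph, \matching)$ and in $(\graph \setminus \rparty', \matching \setminus \rparty')$. Unfolding Listing~\ref{isa:shifts_to}, the intermediate candidates $v''$ strictly between $v$ and $v'$ in $\rperm$ split into two groups: those in $\rparty'$ lose all incident edges in $\graph \setminus \rparty'$ and are thus vacuously blocked; those outside $\rparty'$ satisfy exactly the same edge and matching-beating conditions in both instances. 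The witness $v'$ itself has strictly larger $\rperm$-rank than $v$, hence larger than $\rperm(\rvertexgen)$, so it too lies outside $\rparty'$ and its conditions transfer verbatim. Consequently the unique minimal $v'$ chosen by the \emph{THE}-description in $\zag$ is identical in both instances, and the induction hypothesis discharges the resulting recursive call.

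The main difficulty I expect is exactly this shifts-to preservation argument: ruling out that a ``new'' smaller-rank witness appears after restriction (it cannot, since restriction only deletes edges and matching edges, never adds them) and dually that the original witness is not invalidated (its relevant conditions mention only vertices outside $\rparty'$ or universally-quantified candidates that become easier, not harder, to block). A secondary technical point is to establish up front that $\matching \setminus \rparty'$ is still a matching, so that evaluation stays in the \emph{proper-zig}/\emph{proper-zag} clauses of Listing~\ref{isa:zig_zag} rather than collapsing into the base cases introduced purely for termination.
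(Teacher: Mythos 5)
Your proposal is correct and, as far as can be judged, coincides with the paper's route: the paper deliberately omits this proof from the prose, describing it only as ``an involved case analysis of the behaviour of \shiftsto'' and deferring to the formalisation, whose statement of Lemma~\ref{lem:remove_verts_zig_zag} (see the appendix listing) is precisely a mutual induction over the \zig/\zag\ recursion carrying a generalized rank hypothesis as your invariant does, with the crux being preservation of the \shiftsto\ successor under removal of $\rparty'$. One caution on your handling of intermediate candidates lying in $\rparty'$: calling them ``vacuously blocked'' because they lose their incident edges only establishes blockedness in the \emph{restricted} instance; if such a candidate were unblocked in the original instance, the two \shiftsto\ successors would genuinely differ. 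Your argument is saved only because your own invariant guarantees that every vertex of $\rparty'$ has strictly smaller rank than the current online vertex, so no vertex of $\rparty'$ can lie between it and its successor and that group is empty --- this should be said explicitly rather than argued by edge deletion. A final remark: the formal statement strengthens the hypothesis to a condition on the ranks of the \emph{partners} of the removed vertices (the paper's margin note concedes the prose hypothesis is not the one actually threaded through the induction), but your invariant plays the same structural role.
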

We do not prove this lemma here: the proof depends on an involved case analysis of the behaviour of \shiftsto, and we describe below similar case analyses, which convey the difficulty of translating such obvious graphical arguments into proofs.
Interested readers, however, should refer to the accompanying formal proof.

The second idea is that we exploit the symmetry between the online and the offline vertices.
This is encoded in the following relationship between \zig\ and \zag.
\begin{mylem}
\label{lem:zigsymm}
Let $\graph$ be a bipartite graph w.r.t.\ $\lperm$ and $\rperm$.
Consider a vertex $\rvertexgen\in\rperm$.
Let $\graphsmall$ be $G \setminus \{\rvertexgen\}$.
Let $\matching$ be a \rankmatching\ w.r.t.\ $\graph$, $\rperm$, and $\lperm$, and $\matching'$ be a \rankmatching\ w.r.t.\ $\graphsmall$, $\lperm$, and $\rperm$.
Let $\lvertexgen$ be a vertex s.t.\ $\{\lvertexgen,\rvertexgen\}\in\matching$.
We have that $\zig(\graphsmall,\matching',\lvertexgen,\rperm,\lperm)=\zag(\graph,\matching,\lvertexgen,\lperm,\rperm)$.
\end{mylem}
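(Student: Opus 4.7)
The plan is to prove this by mutual strong induction on the well-founded termination relation used to define \zig\ and \zag. Because the two functions are mutually recursive, I will strengthen the target to a pair of claims: one equating $\zig$ on $(\graphsmall, \matching')$ with $\zag$ on $(\graph, \matching)$, started at an offline-party vertex whose $\matching$-partner is the ``just-removed'' online vertex, and a symmetric one equating $\zag$ on $(\graphsmall, \matching')$ with $\zig$ on $(\graph, \matching)$, started at an online-party vertex produced by the previous \shiftsto\ step. Throughout, I will invoke Proposition~\ref{prop:rankingspec}(2) to freely swap the roles of $\lperm$ and $\rperm$ in the \rankmatching\ specifications, so that $\matching$ and $\matching'$ can each be read with whichever orientation makes the next unfolding type-check.

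The combinatorial core of the argument, and the main obstacle, is a bridge lemma I would prove first. It says: for $\lvertexgen \in \lperm$ matched to $\rvertexgen$ in $\matching$, the partner of $\lvertexgen$ in $\matching'$ (when one exists) is exactly the vertex $\rvertexgen'$ such that $\rvertexgen$ \shiftsto\ $\rvertexgen'$ in $\matching$ read with $\lperm$ as $\pi$ and $\rperm$ as $\sigma$; and $\lvertexgen$ is unmatched in $\matching'$ exactly when no such $\rvertexgen'$ exists. This is essentially the intuitive content of \shiftsto\ in the swapped orientation, but turning it into a formal statement requires a careful case split reconciling four ingredients: the maximality clause of \rankmatching\ applied to $\matching'$ on $\graphsmall$, the minimum-rank selection on each side of $\matching'$, the ``no earlier $\pi$-neighbour survives'' clause of \shiftsto, and the order comparisons between $\rperm$ and $\lperm$. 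The informal picture in Fig.~\ref{fig:online_zig}--\ref{fig:online_zag} makes the correspondence look immediate, but, exactly as the paper emphasises, this is the kind of graphical obviousness that is delicate to formalise.

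With the bridge lemma in hand, the inductive step becomes routine. Both sides of the target emit $\lvertexgen$ at the head; the bridge then identifies the next vertex on both sides as the same $\rvertexgen'$, so the two recursive calls are $\zag(\graphsmall, \matching', \rvertexgen', \rperm, \lperm)$ on the left and $\zig(\graph, \matching, \rvertexgen', \lperm, \rperm)$ on the right. These fall under the second (symmetric) claim of the strengthened induction, which is handled by a symmetric bridge in which the roles of \zig/\zag\ and of offline/online swap. The mutual recursion continues down the zig-zag until the chain terminates, and the base case --- no such $\rvertexgen'$ exists --- is just the contrapositive of the bridge, which forces $\lvertexgen$ to be unmatched in $\matching'$, so the LHS also terminates at $[\lvertexgen]$. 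The only administrative hurdle will be formulating the induction using the custom termination relation of \zig-\zag\ given just before the lemma, rather than structural induction, since it is the length of the shift chain --- not the structure of any data type --- that actually decreases.
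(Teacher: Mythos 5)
Your overall strategy --- unfold one step of each function, match the heads via a ``bridge'' identifying the next vertex on both sides, and recurse via a mutually strengthened pair of claims --- is close in spirit to what is needed, but as formulated the induction does not close, and the reason it does not close is precisely the difficulty the paper's proof is engineered around. Your first claim is hypothesised on the starting offline vertex having its $\matching$-partner equal to the removed online vertex $\rvertexgen$; this is true for $\lvertexgen$ but false for every later offline vertex on the chain. After one zig--zag round the recursion arrives at $\lvertexgen'$, whose $\matching$-partner is $\rvertexgen'$ --- a vertex still present in $\graphsmall$ --- so $\lvertexgen'$ is not an instance of your first claim, and your bridge lemma (stated for a vertex whose partner has been deleted) no longer applies to it. The relationship between $\matching$ and $\matching'$ at the current position changes as you walk down the chain, and your proposal does not identify the invariant that would replace the ``partner just removed'' hypothesis; nor can the bridge be proved for \emph{all} offline vertices, since off-chain vertices keep the same partner in both matchings while \shiftsto\ demands a strictly later successor.

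The paper resolves this by strong induction on the index of $\lvertexgen$ rather than on the \zig/\zag\ termination relation, and --- this is the key move you are missing --- by applying the induction hypothesis not to the original structures but to the shrunken ones $\graph\setminus\{\rvertexgen,\lvertexgen\}$, $\matching\setminus\{\rvertexgen,\lvertexgen\}$ and $\matching'\setminus\{\lvertexgen,\rvertexgen'\}$. After deleting the already-traversed pair, the configuration at the next vertices is again a \emph{literal} instance of the lemma's hypotheses, so no strengthened chain invariant is needed. The price is that the induction hypothesis then speaks about \zig\ and \zag\ on the smaller graph and matchings, and Lemma~\ref{lem:remove_verts_zig_zag} is exactly the tool that transfers these back to the original structures, since the deleted vertices are all ranked before the current position. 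Your proposal never invokes Lemma~\ref{lem:remove_verts_zig_zag}; without it, or without discovering the correct invariant carried along the chain, the recursive step cannot be completed. The bridge content you describe (identifying the $\matching'$-partner of $\lvertexgen$ with the \shiftsto-successor of its $\matching$-partner) does appear in the paper's argument, but only for the single head step of each inductive round, where the ``partner removed'' hypothesis genuinely holds.
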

Before we discuss the proof, we first show a graphical argument of why the lemma holds.
Fig.~\ref{fig:online_zig}~and~\ref{fig:online_zag} show an example of how \zig\ and \zag\ would return the same list of vertices if invoked on the same vertex once on the offline side, and another time on the online side.
In the first configuration, $\zag(\graph,\matching,\lvertexb,\lperm,\rperm)$ chooses $\rvertexc$, because in $\matching$, we have that $\lvertexb$ is matched to $\rvertexb$, and $\rvertexb$ \shiftsto\ $\rvertexc$.
Then the rest of the recursive calls proceed as shown in the figure.
When the online and offline sides are flipped, as shown in Fig.~\ref{fig:online_zag}, $\zig(\graphsmall,\matching',\lvertexb,\rperm,\lperm)$, where $\graphsmall$ denotes $\graph\setminus\{\rvertexb\}$, will also choose $\rvertexc$ because, this time, it will be matched to $\lvertexb$ in $\matching'$, which is a \rankmatching\ for $\graphsmall$.
As we will see in the proof, this graphical argument is much shorter than the corresponding textual proof, let alone the formal proof.
\begin{proof}
Our proof is by strong induction on the index of $\lvertexgen$.
Let all the variable names in the I.H.\ be barred, e.g.\ the graph is $\ihvar{\graph}$.
Our proof is done by case analysis.
We consider 3 cases: \begin{enumerate*}\item we have vertices $\rvertexgen'$, $\lvertexgen'$, s.t.\ $\{\lvertexgen,\rvertexgen'\}\in\matching'$ and $\{\rvertexgen',\lvertexgen'\}\in\matching$, \item we have a vertex $\rvertexgen'$, s.t.\ $\{\lvertexgen,\rvertexgen'\}\in\matching'$ and there is no $\lvertexgen'$ s.t.\ $\{\rvertexgen',\lvertexgen'\}\in\matching$, and \item there is no vertex $\rvertexgen'$, s.t.\ $\{\lvertexgen,\rvertexgen'\}\in\matching'$.\end{enumerate*}

We focus on the first case, as that is the one where we employ the I.H.
To apply the I.H., we use the following assignments of the quantified variables.
$\ihvar{\graph}\mapsto\graph\setminus\{\rvertexgen,\lvertexgen\}$, $\ihvar{\rperm}\mapsto\rperm$, $\ihvar{\lperm}\mapsto\lperm$, $\ihvar{\rvertexgen}\mapsto\rvertexgen'$, $\ihvar{\lvertexgen}\mapsto\lvertexgen'$, $\ihvar{\matching}\mapsto\matching\setminus\{\rvertexgen,\lvertexgen\}$, and $\ihvar{\matching'}\mapsto\matching'\setminus\{\lvertexgen,\rvertexgen'\}$.
From the I.H., we get $\zig(\ihvar\graphsmall,\ihvar\matching',\ihvar\lvertexgen,\ihvar\rperm,\ihvar\lperm)=\zag(\ihvar\graph,\ihvar\matching,\ihvar\rvertexgen,\ihvar\lperm,\ihvar\rperm)$.
This proof is then finished by Lemma~\ref{lem:remove_verts_zig_zag}.
\end{proof}

We are now ready to prove a lemma that immediately implies Lemma~\ref{lem:lemma2}.
\begin{mylem}
\label{lem:lemma2abs}
Let $\graph$ be a bipartite graph w.r.t.\  $\lperm$ and $\rperm$.
Consider a vertex $\rvertexgen\in\rperm$.
Let $\graphsmall$ be $G \setminus \{\rvertexgen\}$.
Let $\matching$ be a \rankmatching\ w.r.t.\ $\graph$, $\lperm$, and $\rperm$, and $\matching'$ be a \rankmatching\ w.r.t.\ $\graphsmall$, $\lperm$, and $\rperm$.
We have that $\matching\oplus\matching'=\zig(\graph,\matching,\rvertexgen,\lperm,\rperm)$\footnote{We abuse the notation: although $\zig(\graph,\matching,\rvertexgen,\lperm,\rperm)$ is the list of vertices in the path, we use it here to denote the edges in the path.} or $\matching=\matching'$.
\end{mylem}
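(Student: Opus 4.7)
The plan is to prove this by well-founded induction on the recursion relation of $\zig$/$\zag$, with an outer case analysis on whether $\rvertexgen$ is matched in $\matching$ and an inner sub-case analysis on the status of its partner in $\matching'$, mirroring the style of the proof of Lemma~\ref{lem:zigsymm}.

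If $\rvertexgen$ is unmatched in $\matching$, then $\matching$ is itself a \rankmatching\ w.r.t.\ $\graphsmall$, $\lperm$, $\rperm$: bipartiteness, maximality and the two local optimality conditions of \rankmatching\ all transfer, since removing an isolated vertex neither uncovers an edge of $\graphsmall$ nor introduces a new violation of the optimality conditions (which quantify only over edges already present in $\matching$). The uniqueness clause of Proposition~\ref{prop:rankingspec} then forces $\matching'=\matching$, giving the right disjunct.

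Otherwise $\rvertexgen$ is matched in $\matching$ to some $\lvertexgen$. Unfolding $\zig$ yields
\[
  \zig(\graph,\matching,\rvertexgen,\lperm,\rperm) = \rvertexgen \concat \zag(\graph,\matching,\lvertexgen,\lperm,\rperm),
\]
and Lemma~\ref{lem:zigsymm} rewrites the tail as $\zig(\graphsmall,\matching',\lvertexgen,\rperm,\lperm)$. The leading edge $\{\rvertexgen,\lvertexgen\}$ lies in $\matching\setminus\matching'$. I would then sub-case on $\lvertexgen$'s status in $\matching'$. If $\lvertexgen$ is matched in $\matching'$ to some $\rvertexgen'$, then $\{\lvertexgen,\rvertexgen'\}\in\matching'\setminus\matching$ contributes the next path edge, and I would apply the induction hypothesis to the reduced pair obtained by deleting $\{\rvertexgen,\lvertexgen\}$ and treating $\rvertexgen'$ as the newly removed vertex, using Lemma~\ref{lem:remove_verts_zig_zag} to certify that the tail of $\zig(\graphsmall,\matching',\lvertexgen,\rperm,\lperm)$ depends only on this reduced configuration. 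If instead $\lvertexgen$ is unmatched in $\matching'$, the $\zig$ terminates after the leading edge, and one must show $\matching\oplus\matching' = \{\{\rvertexgen,\lvertexgen\}\}$: the impossibility of a further shift at $\lvertexgen$, combined with the ranking-matching properties of both $\matching$ and $\matching'$, forces them to agree outside $\{\rvertexgen,\lvertexgen\}$, via another application of the uniqueness part of Proposition~\ref{prop:rankingspec}.

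The main obstacle will be the inductive step: one has to pick the reduced configuration so that its ranking-matchings coincide with appropriate restrictions of $\matching$ and $\matching'$, and one must certify through Lemma~\ref{lem:remove_verts_zig_zag} that the remainder of $\zig$ is insensitive to the already-consumed prefix of the path. Lemma~\ref{lem:zigsymm} mediates the swap between the two orientations at each alternation step, while the remaining work—that the zig path is simple, that its edges alternate correctly between $\matching$ and $\matching'$, and that their union exactly covers $\matching\oplus\matching'$—is routine in spirit but, as the authors emphasise, nontrivial to discharge formally, matching the pattern of difficulty already encountered in Lemma~\ref{lem:zigsymm}.
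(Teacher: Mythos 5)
Your plan assembles the same ingredients as the paper's proof --- the outer case split on whether $\rvertexgen$ is matched in $\matching$, Lemma~\ref{lem:zigsymm} to pass between the two matchings, Lemma~\ref{lem:remove_verts_zig_zag} to make the tail of the path insensitive to already-consumed vertices, and the edge-removal and uniqueness properties of \rankmatching\ --- but the inductive architecture is genuinely different. The paper does strong induction on $\cardinality{\graph}$ and, in the matched case, applies the I.H.\ after a \emph{single} alternation with every role swapped: $\ihvar{\graph}\mapsto\graphsmall$, $\ihvar{\matching}\mapsto\matching'$, $\ihvar{\matching'}\mapsto\matching\setminus\{\lvertexgen,\rvertexgen\}$, $\ihvar{\rperm}\mapsto\lperm$, $\ihvar{\lperm}\mapsto\rperm$, removed vertex $\lvertexgen$. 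A single application of Lemma~\ref{lem:zigsymm} then converts the I.H.'s $\zig(\graphsmall,\matching',\lvertexgen,\rperm,\lperm)$ into $\zag(\graph,\matching,\lvertexgen,\lperm,\rperm)$, prepending $\rvertexgen$ closes the step, and the terminal situation falls out of the I.H.'s equality disjunct with no separate argument. You instead advance two edges per step in a fixed orientation, which buys a more ``pictorial'' path construction but costs you (a) a conversion back from the swapped orientation to the original one before your reduced instance matches the I.H.\ --- this needs a second \ref{lem:zigsymm}-style step, not only Lemma~\ref{lem:remove_verts_zig_zag} --- and (b) a direct argument in the terminal sub-case ($\lvertexgen$ unmatched in $\matching'$), where proving $\matching-\{\{\rvertexgen,\lvertexgen\}\}$ is a \rankmatching\ w.r.t.\ $\graphsmall$ (in particular maximal, which requires extracting from the maximality of $\matching'$ that $\lvertexgen$ has no unmatched neighbour left in $\graphsmall$) is exactly the work the paper's role-swap avoids. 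One concrete repair is needed before your plan is sound: the ``recursion relation of $\zig$/$\zag$'' is a relation on vertices for a \emph{fixed} graph and matching, so it does not relate your original instance to the reduced instance, whose graph and matchings have changed; induct on $\cardinality{\graph}$ (as the paper does) or another measure of the instance that strictly decreases when $\{\rvertexgen,\lvertexgen\}$ is deleted.
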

\begin{proof}
Our proof is by strong induction on $\cardinality{\graph}$.
Again, let all the variable names in the I.H.\ be barred.
We consider two cases, either $\rvertexgen\notin\vertices(\matching)$ or $\rvertexgen\in\vertices(\matching)$.
In the former case, the lemma follows immediately, since \rank\ will compute the same matching.

For the second case, we instantiate the I.H.\ as follows:
$\ihvar{\graph}\mapsto\graph\setminus\{\rvertexgen\}$, $\ihvar\matching\mapsto\matching'$, $\ihvar{\matching'}\mapsto\matching\setminus\{\lvertexgen,\rvertexgen\}$, $\ihvar\rperm\mapsto\lperm$, $\ihvar{\lperm}\mapsto\rperm$, and $\ihvar{\rvertexgen}\mapsto\lvertexgen$, where $\lvertexgen$ is some vertex s.t.\ $\{\lvertexgen,\rvertexgen\}\in\matching$, which must exist since $\rvertexgen\in\vertices(\matching)$.\footnote{The instantiation of $\ihvar{\graphsmall}$ follows implicitly from the other instantiations.}
To show that the I.H.\ is usable in this case, we need to show that: \begin{enumerate*}\item $\ihvar{\matching}$ is a \rankmatching\ w.r.t.\ $\ihvar{\graph}$, $\ihvar{\rperm}$, and $\ihvar{\lperm}$, and \item $\ihvar{\matching'}$ is a \rankmatching\ w.r.t.\ $\ihvar{\graphsmall}$, $\ihvar{\rperm}$, and $\ihvar{\lperm}$\end{enumerate*}.
The first requirement follows from the assumption that $\matching'$ is \rankmatching\ w.r.t.\ $\graphsmall$, $\lperm$, and $\rperm$, and the fact that \rankmatching\ is commutative w.r.t.\ the left and right parties of the given graph.
The second requirement follows from a property of \rankmatching, which we do not
prove here, stating that for any $\matching$ that is a \rankmatching\ w.r.t.\
$\graph$, $\lperm$, and $\rperm$, and for any $\edgegen\in\matching$, $\matching-\{\edgegen\}$ is a \rankmatching\ w.r.t.\ $\graph\setminus\edgegen$, $\lperm$, and $\rperm$.

Then, from the I.H.\ and since we know that $\lvertexgen\in\vertices(\matching)$, we have that either \begin{enumerate*} \item $\ihvar\matching=\ihvar\matching'$ or \item $\ihvar{\matching}\oplus\ihvar{\matching'} = \zig(\ihvar{\graph},\ihvar{\matching},\ihvar{\rvertexgen},\ihvar{\lperm},\ihvar{\rperm})$.\end{enumerate*}
In the former case, we have that $\matching'=\matching\setminus\{\rvertexgen,\lvertexgen\}$, so $\lvertexgen$ was not matched to anything in the graph, after removing $\rvertexgen$.
This means that there is no $\rvertexgen'$ for $\lvertexgen$ s.t.\ $\rvertexgen$ \shiftsto\ $\rvertexgen'$, which means that $\zig(\graph,\matching,\rvertexgen,\lperm,\rperm) = [\rvertexgen,\lvertexgen]$.
From that, we have the lemma proved for this case, since $\matching\oplus\matching'=\{\lvertexgen,\rvertexgen\}$.

In the second case, we have that $\ihvar{\matching}\oplus\ihvar{\matching'} = \zig(\graph\setminus\{\rvertexgen\},\matching',\lvertexgen,\rperm,\lperm)$.
From Lemma~\ref{lem:zigsymm}, we have $\zig(\graph\setminus\{\lvertexgen\},\matching',\lvertexgen,\rperm,\lperm) = \zag(\graph,\matching,\lvertexgen,\lperm,\rperm)$.
From the definition of \zig\ and since $\{\rvertexgen,\lvertexgen\}\in\matching$, the lemma follows for this case.
\end{proof}

\begin{proof}[Proof of Lemma~\ref{lem:lemma2}]
Lemma~\ref{lem:lemma2} follows immediately from Lemma~\ref{lem:lemma2abs} lemma and from Proposition~\ref{prop:rankingspec}.
\end{proof}

\subsection{Finishing the Proof}

The next step in our proof is to generalise the previous analysis to address the case when the removed vertex is from the offline side of the graph.
Although this is not considered by any of the previous expositions, this generalisation is crucial for proving the competitive ratio for general bipartite graphs, i.e.\ graphs that do not have a perfect matching.
\begin{mylem}
\label{lem:lemma2absoffline}
Let $\graph$ be a bipartite graph w.r.t.\ $\lperm$ and $\rperm$.
Consider a vertex $\lvertexgen\in\lperm$.
Let $\graphsmall$ be $G \setminus \{\lvertexgen\}$.
Let $\matching$ be a \rankmatching\ w.r.t.\ $\graph$, $\lperm$, and $\rperm$, and $\matching'$ be a \rankmatching\ w.r.t.\ $\graphsmall$, $\lperm$, and $\rperm$.
We have that $\matching\oplus\matching'=\zig(\graph,\matching,\lvertexgen,\rperm,\lperm)$ or $\matching=\matching'$.
\end{mylem}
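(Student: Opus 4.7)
The plan is to derive this lemma directly from Lemma~\ref{lem:lemma2abs} by exploiting the symmetry between the offline and online sides, which is already captured by Proposition~\ref{prop:rankingspec}. Bipartiteness is symmetric in the two parties, so from $\graph$ being bipartite w.r.t.\ $\lperm$ and $\rperm$ we also get that $\graph$ is bipartite w.r.t.\ $\rperm$ and $\lperm$, and the same holds for $\graphsmall$. By part~2 of Proposition~\ref{prop:rankingspec}, $\matching$ being a \rankmatching\ w.r.t.\ $\graph$, $\lperm$, $\rperm$ implies it is also a \rankmatching\ w.r.t.\ $\graph$, $\rperm$, $\lperm$, and an analogous statement holds for $\matching'$ w.r.t.\ $\graphsmall$.

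In this re-oriented view, the vertex $\lvertexgen$ plays the syntactic role that $\rvertexgen$ did in Lemma~\ref{lem:lemma2abs}: it lies in the permutation appearing in the last position of the \rankmatching\ arguments. Concretely, I would instantiate Lemma~\ref{lem:lemma2abs} with the left and right permutations swapped, mapping its $\rvertexgen$ to our $\lvertexgen$, its $\lperm$ to our $\rperm$, and its $\rperm$ to our $\lperm$. The hypothesis that the smaller graph is $\graph \setminus \{\lvertexgen\}$ is satisfied because set removal is agnostic to which party a vertex belongs to. The conclusion then reads $\matching \oplus \matching' = \zig(\graph, \matching, \lvertexgen, \rperm, \lperm)$ or $\matching = \matching'$, which is precisely the desired statement, since in the zig/zag definitions the two permutation arguments are passed through unchanged and only the identity of the starting vertex dictates which side the traversal begins on.

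The proof is thus essentially a one-line appeal to the already-proved lemma together with the commutativity clause of Proposition~\ref{prop:rankingspec}. The only potential obstacle I foresee is bookkeeping in the Isabelle formalisation: if any auxiliary result used in the proof of Lemma~\ref{lem:lemma2abs} was accidentally stated in a way that privileges one side (for instance, a helper about \shiftsto\ that implicitly fixes which party the shifting vertex lives on, or a version of Lemma~\ref{lem:remove_verts_zig_zag} that is phrased only for removing online vertices), the swapping argument may not go through verbatim. In that case the remedy is to prove the appropriate symmetric variants of those helpers, which should themselves follow by the same commutativity argument. Assuming the underlying lemmas are stated symmetrically, I expect the formal proof to be just a few lines long.
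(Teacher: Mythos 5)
Your derivation is correct, but it is not the route the paper takes. The paper's proof of this lemma is a second induction run in parallel to that of Lemma~\ref{lem:lemma2abs} --- the authors explicitly say it is ``very similar'' to that proof and that the gain from symmetry is only that the helper lemmas (Lemma~\ref{lem:remove_verts_zig_zag}, Lemma~\ref{lem:zigsymm}) can be reused without restatement. You instead obtain the lemma as a literal instance of Lemma~\ref{lem:lemma2abs} under the substitution $\lperm \leftrightarrow \rperm$, $\rvertexgen \mapsto \lvertexgen$, discharging the transformed hypotheses via part~2 of Proposition~\ref{prop:rankingspec} and the party-agnosticism of bipartiteness and of $\graph \setminus \{\cdot\}$; the conclusion $\zig(\graph,\matching,\rvertexgen,\lperm,\rperm)$ then transforms exactly into $\zig(\graph,\matching,\lvertexgen,\rperm,\lperm)$ because $\zig$/$\zag$ only thread the two lists through to \shiftsto. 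This checks out at the level of the statements as printed: indeed, the paper's own proof of Lemma~\ref{lem:lemma2abs} already instantiates its induction hypothesis with the two permutations swapped, which confirms that the statement is orientation-generic and that your instantiation is legitimate. What your approach buys is the observation that the ``online'' and ``offline'' removal lemmas are one statement, not two, so no second induction is needed; what the paper's approach reflects is most likely the formalisation reality you anticipate in your last paragraph --- the Isabelle development fixes an orientation (a locale with a distinguished offline party, \shiftsto{} and the termination relation phrased with $u \in \pi$ and $v, v' \in \sigma$), so the swap is not free there and the authors chose to replay the argument rather than generalise the infrastructure. Your caveat about this is exactly the right one to flag.
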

The proof of this lemma is very similar to that of Lemma~\ref{lem:lemma2}.
However, we are able to reuse all our lemmas that exploit the symmetry of the offline and online sides of the graphs, so there is not much redundancy in our proofs.

Until now, we have primarily focused on the \emph{structural} difference between matchings computed by \rank\ before and after removing a vertex from the original graph.
The next step in the proof is to use that to reason about the competitiveness of \rank\ for general bipartite graphs.
The first step is proving the following lemma.
\begin{mylem}
\label{lem:removevtxcardmatching}
Let $\graph$ be a bipartite graph w.r.t.\  $\lperm$ and $\rperm$.
Consider a vertex $x$.
Let $\graphsmall$ be $G \setminus \{x\}$.
Let $\matching$ be a \rankmatching\ w.r.t.\ $\graph$, $\lperm$, and $\rperm$, and $\matching'$ be a \rankmatching\ w.r.t.\ $\graphsmall$, $\lperm$, and $\rperm$.
We have that $\cardinality{\matching'}\leq\cardinality{\matching}$.
\end{mylem}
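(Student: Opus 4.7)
The plan is to reduce to the path-dichotomy already established. Depending on which side $x$ lies on, I invoke either Lemma~\ref{lem:lemma2abs} (if $x$ is on the online side) or Lemma~\ref{lem:lemma2absoffline} (if $x$ is on the offline side). Each of these yields the same dichotomy: either $\matching = \matching'$, in which case $\cardinality{\matching'} \leq \cardinality{\matching}$ holds trivially, or the symmetric difference $\matching \oplus \matching'$ equals the edge set of the zig-zag path starting at $x$, which by Lemma~\ref{lem:lemma2} is alternating with respect to both $\matching$ and $\matching'$.

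In the non-trivial case I argue by a standard parity/counting argument on this alternating path. Since $\graphsmall = \graph \setminus \{x\}$ removes every edge incident to $x$, we have $x \notin \vertices(\graphsmall)$ and hence $x \notin \vertices(\matching')$. The path starts at $x$, so its first edge is incident to $x$; that edge lies in $\matching \oplus \matching'$ and cannot be in $\matching'$, hence it lies in $\matching \setminus \matching'$. Because the path's edge set is exactly $\matching \oplus \matching'$ and the path alternates with respect to both matchings, consecutive edges along the path alternate strictly between $\matching \setminus \matching'$ and $\matching' \setminus \matching$, beginning with an edge in $\matching \setminus \matching'$. Consequently $\cardinality{\matching \setminus \matching'} \geq \cardinality{\matching' \setminus \matching}$, which rearranges via $\cardinality{\matching} - \cardinality{\matching'} = \cardinality{\matching \setminus \matching'} - \cardinality{\matching' \setminus \matching}$ to the desired bound $\cardinality{\matching'} \leq \cardinality{\matching}$.

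The main obstacle, and where most of the formal work will sit, is the bookkeeping needed to translate between the vertex-list description of the path (in which the alternating-path definition is phrased via an existentially quantified auxiliary edge set $E'$ and indexed by parity) and the strict alternation of membership between $\matching \setminus \matching'$ and $\matching' \setminus \matching$ required for the counting step. Concretely, the witnesses for $E'$ relative to $\matching$ and to $\matching'$ must be chosen compatibly, and one must use the fact that the path's edge set is exactly $\matching \oplus \matching'$ to conclude that, along this path, an edge is in $\matching$ iff it is not in $\matching'$. Once this alignment is carried out, the cardinality inequality drops out by a short calculation.
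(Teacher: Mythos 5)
Your proof is correct, but it closes the argument differently from the paper. Both you and the paper reduce to the dichotomy of Lemma~\ref{lem:lemma2abs} (resp.\ Lemma~\ref{lem:lemma2absoffline} for an offline $x$): either $\matching=\matching'$, or $\matching\oplus\matching'$ is the edge set of the $\zig$ path starting at the removed vertex. Where you diverge is the non-trivial case. The paper argues by contradiction: if $\cardinality{\matching'}>\cardinality{\matching}$, Berge's lemma yields an augmenting path w.r.t.\ $\langle\graph,\matching\rangle$ inside the symmetric difference, hence two endpoints outside $\vertices(\matching)$, contradicting the structural fact that every vertex of the $\zig$ path except possibly the last is matched in $\matching$. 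You instead count directly: the first edge lies in $\matching\setminus\matching'$ because $x\notin\vertices(\matching')$, the edges strictly alternate between $\matching\setminus\matching'$ and $\matching'\setminus\matching$ (which, as you note, really only needs that both are matchings and that the path's edge set is exactly $\matching\oplus\matching'$), and the parity count gives $\cardinality{\matching\setminus\matching'}\geq\cardinality{\matching'\setminus\matching}$. Your route is self-contained and arguably more elementary, at the cost of the alternation/bookkeeping you identify; the paper's route buys brevity by reusing the already-formalised Berge's lemma and a lemma about which vertices of $\zig$ are matched, which is a sensible trade-off in a development where Berge's lemma is needed anyway.
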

\begin{proof}
Our proof is by case analysis.
The first case is when $x\notin\vertices(\matching)$.
In this case we will have that $\matching = \matching'$, which finishes our proof.

The second case is when $x\in\vertices(\matching)$.
In this case, we have two sub-cases: either $x\in\rperm$ or $x\in\lperm$.
We only describe the first case here and the second is symmetric.
Our proof is by contradiction, i.e.\ assuming $\cardinality{\matching'}>\cardinality{\matching}$.
From Lemma~\ref{lem:lemma2abs}, we have that $\matching\oplus\matching'=\zig(\graph,\matching,\rvertexgen,\lperm,\rperm)$.
Also note that, from Berge's lemma, we will have that a subsequence of $\zig(\graph,\matching,\rvertexgen,\lperm,\rperm)$ is an augmenting path w.r.t.\ $\langle\graph,\matching\rangle$.
We know from the definition of an augmenting path that, both, its first and last vertices are not in the matching it augments.
Accordingly, we have that the first and last vertices of that subsequence of
$\zig(\graph,\matching,\rvertexgen,\lperm,\rperm)$ are not in $\matching$.
This is a contradiction, because all vertices in $\zig(\graph, \matching, \rvertexgen, \lperm, \rperm)$, except possibly the last one, are in $\vertices(\matching)$.
\end{proof}

\newcommand{\removeunmatched}{\textit{make-perfect}}
\begin{figure*}[t]
\begin{IsabelleSnippet}[label=isa:make_perfect]{Formalising the specification of \removeunmatched's output in Isabelle/HOL.}
function make_perfect_matching :: "'a graph \<Rightarrow> 'a graph \<Rightarrow> 'a graph" where
  "make_perfect_matching G M = (
    if (\<exists>x. x \<in> Vs G \<and> x \<notin> Vs M)
    then make_perfect_matching (G \<setminus> {SOME x. x \<in> Vs G \<and> x \<notin> Vs M}) M
    else G
  )
  " if "finite G"
| "make_perfect_matching G M = G" if "infinite G"
\end{IsabelleSnippet}
\end{figure*}

Lastly, we show that, given a bipartite graph $\graph$ and a maximum cardinality matching $\matching$ for that graph, we can recursively remove the vertices that do not occur in $\matching$.
To do that we define a recursive function, \removeunmatched, to remove these vertices and then prove the following lemma by computation induction, using the computation induction principle corresponding to \removeunmatched.
Listing~\ref{isa:make_perfect} shows the formalisation of that function.
\begin{mylem}
\label{lem:removevtxcardmatching}
Let $\graph$ be a bipartite graph w.r.t.\  $\lperm$ and $\rperm$.
Let $\matching$ be a \rankmatching\ w.r.t.\ $\graph$, $\lperm$, and $\rperm$, and $\matching'$ be a \rankmatching\ w.r.t.\ $\removeunmatched(\graph,\matching)$, $\lperm$, and $\rperm$.
We have that $\cardinality{\matching'}\leq\cardinality{\matching}$.
\end{mylem}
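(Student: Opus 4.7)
The plan is to prove this by computation induction following the recursive structure of $\removeunmatched$, as the surrounding text suggests. In the base case every vertex of $\graph$ already lies in $\vertices(\matching)$, so $\removeunmatched(\graph,\matching) = \graph$ by definition; both $\matching$ and $\matching'$ are then ranking-matchings with respect to the same triple $\graph$, $\lperm$, $\rperm$, and the uniqueness part of Proposition~\ref{prop:rankingspec} immediately yields $\matching' = \matching$, so the conclusion holds with equality.

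For the inductive step I would take the witness $x \in \vertices(\graph) \setminus \vertices(\matching)$ that the recursive clause picks, use the unfolding $\removeunmatched(\graph,\matching) = \removeunmatched(\graph \setminus \{x\}, \matching)$, and then invoke the induction hypothesis on the smaller instance. To do so I must first discharge the sub-goal that $\matching$ remains a ranking-matching with respect to $\graph \setminus \{x\}$. Bipartiteness and the matching property are inherited by any subgraph; the inclusion $\matching \subseteq \graph \setminus \{x\}$ holds because $x \notin \vertices(\matching)$ means no edge of $\matching$ is incident to $x$; maximality transfers because every edge of $\graph \setminus \{x\}$ is also an edge of $\graph$ and therefore meets $\vertices(\matching)$ by the maximality of $\matching$ in $\graph$; and the two ranking conditions of Listing~\ref{isa:rank_matching} are universally quantified over edges of the ambient graph, so restricting that quantification from $\graph$ to $\graph \setminus \{x\}$ only weakens the premise and leaves both conditions valid.

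The hard part I anticipate is not the induction itself but the careful bookkeeping of those two ranking conditions inside Isabelle/HOL, since they interleave neighbourhoods, indices in $\lperm$ and $\rperm$, and existential witnesses on both sides of the bipartition, and the subgraph operation $\graph \setminus \{x\}$ must be unfolded just enough to keep those witnesses in scope. As a sanity check, applying the same preservation argument in one shot to the fully reduced graph $\removeunmatched(\graph, \matching)$ already shows that $\matching$ is itself a ranking-matching of that graph, so uniqueness alone forces $\matching' = \matching$ and hence $\cardinality{\matching'} = \cardinality{\matching}$; the computation induction is therefore a convenient way to decompose the preservation along the recursion, and the inequality in the statement is in fact saturated.
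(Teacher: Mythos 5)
Your computation induction and your preservation argument are both sound for the statement exactly as printed: when the second argument of $\removeunmatched$ is the ranking-matching $\matching$ itself, every vertex the recursion deletes lies outside $\vertices(\matching)$, all clauses of the \rankmatching{} specification survive such a deletion for precisely the reasons you give, and the uniqueness part of Proposition~\ref{prop:rankingspec} then forces $\matching'=\matching$. But your own closing observation --- that the inequality is always saturated --- is the tell that this is not the lemma the paper actually needs. The formal statement (Listing~\ref{isa:removevtxcardmatching}, \emph{ranking\_matching\_card\_leq\_on\_perfect\_matching\_graph}) reduces the graph with respect to a \emph{different} matching $N$: in the application to Theorem~\ref{thm:rankcorrectgeneral}, $N$ is a maximum-cardinality matching of $\graph$, not RANKING's output, and the vertices removed (those outside $\vertices(N)$) may well be covered by $\matching$. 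There your preservation step fails --- deleting a vertex that $\matching$ does cover genuinely changes the ranking-matching --- and the inequality can be strict. The paper's proof instead threads the preceding single-vertex-removal lemma (the one established via the $\zig$/$\zag$ machinery of Lemmas~\ref{lem:lemma2abs} and~\ref{lem:lemma2absoffline} together with Berge's lemma) through each unfolding of the computation induction, chaining $\cardinality{\matching'}\leq\cdots\leq\cardinality{\matching}$ one removed vertex at a time. So you share the induction skeleton with the paper, but you have substituted a uniqueness shortcut for the step that carries all of the combinatorial content, and that shortcut only goes through because of an imprecision in the informal statement's choice of second argument to $\removeunmatched$.
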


This last lemma leads to the final theorem below.
\begin{mythm}
\label{thm:rankcorrectgeneral}
Let $\graph$ be a bipartite graph w.r.t.\  $\lperm$ and $\rperm$.
Let $\matching$ be a maximum cardinality matching for $\graph$.
We have that $1-(1 - \frac{1}{\cardinality{\matching}+1})^\cardinality{\matching}\leq\mathbb{E}_{R \sim \rankingprob(\graph, \rperm)}[\cardinality{R}]/\cardinality{\matching}$.
\end{mythm}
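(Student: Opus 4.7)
The plan is to reduce to the perfect-matching case already handled by Theorem~\ref{thm:rankcorrectperfect}. Set $n := \cardinality{\matching}$ and form the reduced graph $\graph^{*} := \removeunmatched(\graph,\matching)$. By the recursive definition of \removeunmatched\ (Listing~\ref{isa:make_perfect}), $\graph^{*}$ is obtained from $\graph$ by deleting every vertex not covered by $\matching$; hence $\vertices(\graph^{*}) = \vertices(\matching)$, $\graph^{*}$ inherits bipartiteness w.r.t.\ the restrictions $\lperm^{*}, \rperm^{*}$ of $\lperm, \rperm$, and, crucially, $\matching$ is a \emph{perfect} matching of $\graph^{*}$ of size $n$.

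First I would apply Theorem~\ref{thm:rankcorrectperfect} to $\graph^{*}$ with arrival order $\rperm^{*}$, yielding
\[
  1 - \left(1 - \tfrac{1}{n+1}\right)^{n} \;\leq\; \frac{\mathbb{E}_{R^{*} \sim \rankingprob(\graph^{*}, \rperm^{*})}\!\left[\cardinality{R^{*}}\right]}{n}.
\]
Next I would transfer this lower bound from $\graph^{*}$ to $\graph$ via Lemma~\ref{lem:removevtxcardmatching} (the \removeunmatched\ version, obtainable by iterating the single-vertex Lemma~\ref{lem:removevtxcardmatching} along the recursion tree of \removeunmatched). Pointwise in a drawn permutation, this gives
\[
  \cardinality{\rank(\graph^{*}, \rperm^{*}, \lperm^{*})} \;\leq\; \cardinality{\rank(\graph, \rperm, \lperm)}.
\]

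To lift this pointwise inequality to expectations I must compare the two Giry-monad programs $\rankingprob(\graph,\rperm)$ and $\rankingprob(\graph^{*},\rperm^{*})$, which draw $\lperm$ uniformly from $\perms(\vertices(\graph)\cap\lparty)$ and $\perms(\vertices(\graph^{*})\cap\lparty)$ respectively. The key auxiliary fact is that the restriction map from full permutations to permutations on the smaller vertex set pushes the uniform distribution forward to the uniform distribution; this is the same kind of ``moving to rank $t$'' equivalence that had to be established in the probabilistic part of the proof, and it can be discharged either by a counting argument or by a direct calculation on the Giry monad (reusing the \isaname{pmf\_of\_set} machinery). Combining the two inequalities then yields the theorem.

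The main obstacle I expect is this last bookkeeping step rather than the mathematical content: keeping straight which vertex set each of $\lperm,\rperm,\lperm^{*},\rperm^{*}$ ranges over, verifying that \isaname{step} on $\graph$ with a full permutation agrees, restricted to $\vertices(\graph^{*})$, with \isaname{step} on $\graph^{*}$ with the restricted permutation, and justifying the change of measure formally. The pointwise domination itself is a routine induction, and once expectations are related the algebra inherited from Theorem~\ref{thm:rankcorrectperfect} closes the proof immediately.
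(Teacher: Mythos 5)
Your plan matches the paper's proof: it reduces to the perfect-matching case via $\removeunmatched$, applies Theorem~\ref{thm:rankcorrectperfect} to the reduced graph, and transfers the bound back using Lemma~\ref{lem:removevtxcardmatching} together with the fact that the maximum-matching size is preserved by $\removeunmatched$. The only divergence is the change-of-measure bookkeeping you anticipate: the paper's \rankmatching{} specification and Lemma~\ref{lem:removevtxcardmatching} keep the \emph{same} lists $\lperm,\rperm$ for both graphs (bipartiteness only requires $\vertices(\graph)\subseteq\lparty\cup\rparty$), so no restriction of the permutation space, and hence no pushforward argument, is actually needed.
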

\begin{proof}
This follows immediately from Lemma~\ref{lem:removevtxcardmatching}, Theorem~\ref{thm:rankcorrectperfect}, and the fact that the size of a maximum cardinality matching for $\removeunmatched(\graph,\matching)$ is the same as the size of $\matching$, if $\matching$ is a maximum cardinality matching for $\graph$.
\end{proof}

\subsection{Proving Lemma~\ref{lem:rankunmatchedmoveto}}
\begin{figure}[t]
  \centering
  \begin{subfigure}[t]{.24\textwidth}
    \begin{tikzpicture}[xscale=-1]
      \input{rank_unmatched_moveto}
    \end{tikzpicture}
    \subcaption{}
    \label{fig:rankunmatchedmovetoinit}
  \end{subfigure}
  \begin{subfigure}[t]{.24\textwidth}
    \begin{tikzpicture}[xscale=-1]
      \input{rank_unmatched_moveto_down}
    \end{tikzpicture}
    \subcaption{}
    \label{fig:rankunmatchedmovetodown}
  \end{subfigure}
  \begin{subfigure}[t]{.24\textwidth}
    \begin{tikzpicture}[xscale=-1]
      \input{rank_unmatched_moveto_up_unmatched}
    \end{tikzpicture}
    \subcaption{}
    \label{fig:rankunmatchedmovetoupunmatched}
  \end{subfigure}
  \begin{subfigure}[t]{.24\textwidth}
    \begin{tikzpicture}[xscale=-1]
      \input{rank_unmatched_moveto_up_matched}
    \end{tikzpicture}
    \subcaption{}
    \label{fig:rankunmatchedmovetoupmatched}
  \end{subfigure}
  \caption{Illustrating \autoref{lem:rankunmatchedmoveto}, where
    $\lvertexgen = \lvertexc$, and $\matching(\lvertexc) = \rvertexa$.
    Initially (\ref{fig:rankunmatchedmovetoinit}), $\lvertexc$ is unmatched. Moving it further
    down in the ranking (\ref{fig:rankunmatchedmovetodown}) does not change the
    partner of $\rvertexa$. Moving $\lvertexc$ up in the ranking can either
    (\ref{fig:rankunmatchedmovetoupunmatched}) also leave $\rvertexa$ untouched,
  or (\ref{fig:rankunmatchedmovetoupmatched}) change the partner of $\rvertexa$.}
  \label{fig:rankunmatchedmoveto}
\end{figure}


Until now we have not discussed how we formalised Lemma~\ref{lem:rankunmatchedmoveto} -- we believe it better fits here as its proof is a combinatorial argument.
Graphically, Fig.~\ref{fig:rankunmatchedmoveto} shows some instances of Lemma~\ref{lem:rankunmatchedmoveto} for $\lvertexgen = \lvertexc$ and $\matching(\lvertexc) = \rvertexa$.
No matter where $\lvertexc$ is put, $\rvertexa$ is always matched to a vertex of rank at most $3$.
BM prove this Lemma by stating that the difference, if any, between the matchings computed by $\rank$ before and after moving the offline vertex is also an alternating path, where the ranks of the offline vertices traversed by that path increase. 
Again, like other combinatorial parts of the analysis, graphically this is clearly evident: Fig.~\ref{fig:rankunmatchedmovetoupmatched} shows the difference between $\rank(\graph, \rperm, \lperm)$ and $\rank(\graph, \rperm, \lperm[\lvertexc \mapsto 1])$.
The blue edge was removed from the original matching, and the two red edges are added instead.
The three edges form an alternating path w.r.t.\ to the original matching.

However, to formalise this argument would be as difficult as for Lemma~\ref{lem:lemma2}.
Indeed, we found out that there is no reason to construct the entire difference between the two matchings just to reason about the rank of the vertex $\lvertexgen_i$ to which $\rvertexgen$ is matched in $\rank(\graph, \rperm, \lperm[\lvertexgen \mapsto i])$.
With this approach, the lemma follows almost immediately from the specification $\rankmatching$.
Hence, the formal proof is much shorter than BM's approach.

\section{The Competitive Ratio in the Limit}
\begin{figure*}[t]
\begin{IsabelleSnippet}[label=isa:rankcorrectlim]{The formalisation of Theorem~\ref{thm:rankcorrectlim}}
abbreviation matching_instance_nat :: "nat \<Rightarrow> (nat \<times> nat) graph" where
  "matching_instance_nat n \<equiv> {{(0,k),(Suc 0,k)} |k. k < n}"

definition ranking_instances_nat :: "nat \<Rightarrow> (nat \<times> nat) graph set" where
  "ranking_instances_nat n \<equiv> {G. max_card_matching G (matching_instance_nat n) \<and>
     finite G \<and> G \<subseteq> {{(0,k),(Suc 0,l)} |k l. k < 2*n \<and> l < 2*n}}"

definition arrival_orders :: "(nat \<times> nat) graph \<Rightarrow> (nat \<times> nat) list set" where
  "arrival_orders G \<equiv> permutations_of_set {(Suc 0,l) |l. \<exists>k. {(0,k),(Suc 0,l)} \<in> G}"

definition offline_vertices :: "(nat \<times> nat) graph \<Rightarrow> (nat \<times> nat) set" where
  "offline_vertices G \<equiv> {(0, k) |k. \<exists>l. {(0, k),(Suc 0, l)} \<in> G}"

definition comp_ratio_nat where
  "comp_ratio_nat n \<equiv>
     Min {Min {measure_pmf.expectation 
                (wf_ranking.ranking_prob G \<pi> (offline_vertices G)) card
                   / card (matching_instance_nat n)
                  |\<pi>. \<pi> \<in> arrival_orders G}
                     | G. G \<in> ranking_instances_nat n}"

theorem comp_ratio_limit':
  assumes "convergent comp_ratio_nat"
  shows "1 - exp(-1) \<le> (lim comp_ratio_nat)"
\end{IsabelleSnippet}
\end{figure*}

BM claim that the competitive ratio tends to $1-1/e$ if the matching's size tends to infinity.
The main complication of showing that is to show that the competitive ratio converges, which they do not address at all.
We formalised the following.
\newcommand{\graphfamily}{\ensuremath{\Gamma}}
\newcommand{\compratio}{\ensuremath{\mathcal{Q}}}

\begin{mythm}
\label{thm:rankcorrectlim}
Let $\matching_n$ denote $\{\{(0,k),(1,k)\}\mid 1\leq k \leq n\}$.
Let $\graphfamily_n$ denote graphs in the power set of $\{\{(0,k),(1,l)\}\mid 1\leq k,l \leq 2n\}$ and that have $\matching_n$ as a maximum cardinality matching.
Let $\rperm_n$ denote $\perms(\{(1,k)\mid1\leq k \leq 2n\})$.
If $\compratio_n$ converges, then $\compratio_n$ tends to $1-1/e$ as $n$ tends to $\infty$, where $\compratio_n$ denotes $\min_{(\graph, \rperm)\in\graphfamily_n\times\rperm_n}\mathbb{E}_{R \sim \rankingprob(\graph, \rperm)}[\cardinality{R}]/\cardinality{\matching_n}$.
\end{mythm}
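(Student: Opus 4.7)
Under the convergence hypothesis, showing $\lim_n \compratio_n = 1 - 1/e$ decomposes into a lower bound $\liminf_n \compratio_n \geq 1 - 1/e$ and an upper bound $\limsup_n \compratio_n \leq 1 - 1/e$; together, these squeeze the limit to the claimed value.

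The lower bound is an easy consequence of Theorem~\ref{thm:rankcorrectgeneral}. For every $n$ and every $(\graph, \rperm) \in \graphfamily_n \times \rperm_n$, the set $\matching_n$ is, by definition of $\graphfamily_n$, a maximum cardinality matching of $\graph$ with $|\matching_n| = n$, so
\[
  1 - \bigl(1 - \tfrac{1}{n+1}\bigr)^n \;\leq\; \frac{\mathbb{E}_{R \sim \rankingprob(\graph, \rperm)}[|R|]}{n}.
\]
Taking the minimum over $(\graph, \rperm)$ preserves this pointwise bound, giving $1 - (1 - 1/(n+1))^n \leq \compratio_n$. The standard limit $(1 - 1/(n+1))^n \to 1/e$ (available through Isabelle's limit library) then yields $\liminf_n \compratio_n \geq 1 - 1/e$.

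For the upper bound I would exhibit an explicit sequence of hard instances $(G_n^\star, \rperm_n^\star) \in \graphfamily_n \times \rperm_n$. Take $G_n^\star = \{\{(0,k),(1,l)\} : 1 \leq l \leq k \leq n\}$, the classical ``lower triangular'' bipartite graph, together with an arrival order $\rperm_n^\star$ listing the non-isolated online vertices as $(1,1), (1,2), \ldots, (1,n)$ followed (arbitrarily) by the isolated tail $(1,n{+}1), \ldots, (1,2n)$. Membership $G_n^\star \in \graphfamily_n$ is straightforward: $\matching_n \subseteq G_n^\star$ since each $\{(0,k),(1,k)\}$ satisfies $l=k\leq k$, and $\{(0,1),\dots,(0,n)\}$ is a vertex cover of $G_n^\star$ of size $n$, so by König $\matching_n$ is maximum. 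The crux is then to show $\mathbb{E}_{R \sim \rankingprob(G_n^\star, \rperm_n^\star)}[|R|]/n \to 1 - 1/e$: one sets up the probability $x_t$ that the rank-$t$ offline vertex is matched, exploits the triangular structure to argue that the inequality of Lemma~\ref{lem:ranktmatchedbound} is asymptotically tight on this instance, and aggregates $\sum_t x_t / n \to 1 - 1/e$. This gives $\compratio_n \leq 1 - 1/e + o(1)$.

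The main obstacle is the upper bound, both mathematically and for a formalisation: whereas the lower bound is a one-line appeal to a theorem already proved, the upper bound requires a brand-new, tight probabilistic analysis of $\rankingprob$ on a specific combinatorial family. In Isabelle one would need to (i) construct $G_n^\star$ and $\rperm_n^\star$ inside the types \isaname{ranking\_instances\_nat n} and \isaname{arrival\_orders}, (ii) discharge $G_n^\star \in \graphfamily_n$ via an explicit König-style vertex-cover witness for $\matching_n$, and (iii) formally set up and solve the probability-of-matching recurrence on the lower-triangular instance, mirroring in reverse the inequality used for the lower bound. Step (iii) is substantive and does not reuse the existing lower-bound machinery. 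Once both bounds are in hand, the convergence hypothesis combined with the squeeze pins $\lim_n \compratio_n$ down to $1 - 1/e$.
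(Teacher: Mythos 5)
Your lower-bound half is exactly the paper's proof: apply \autoref{thm:rankcorrectgeneral} to each $(\graph,\rperm)\in\graphfamily_n\times\rperm_n$ (using that $\matching_n$ is a maximum matching of $\graph$ of size $n$), pass to the minimum --- which in the formalisation requires the tedious but necessary fact that $\graphfamily_n$ is finite and nonempty so that the $\min$ exists --- and then take limits, the arithmetic being almost fully automated by Eberl's tool. The divergence is in what you attempt beyond that. The paper does \emph{not} prove your upper bound: its formal theorem (\texttt{comp\_ratio\_limit'} in Listing~\ref{isa:rankcorrectlim}) concludes only $1-e^{-1}\leq \lim_n \compratio_n$, and the authors explicitly state that the missing direction (obtainable from KVV's optimality theorem, or equivalently from a hard-instance analysis like yours) is beyond the scope of their project. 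This is also why the convergence hypothesis appears at all: with only a pointwise lower bound on $\compratio_n$, convergence must be assumed before one can speak of the limit, whereas under your two-sided plan the squeeze would make that hypothesis redundant --- a structural tell that the paper is doing something weaker. Consequently your step~(iii), the tight analysis of RANKING on the lower-triangular instance, is not a reorganisation of existing material but an entirely new and substantial piece of mathematics; it does not follow from \autoref{lem:ranktmatchedbound} being ``asymptotically tight'' on that instance without a separate recurrence or differential-equation argument in the style of KVV's hardness proof, and nothing in the paper's development can be reused for it. If your goal is to match what the paper actually establishes, drop the upper bound and weaken the conclusion to the inequality $1-1/e\leq\lim_n\compratio_n$; if you genuinely carry out the hard-instance analysis, you will have proved strictly more than the formalisation does.
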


We only prove the limit for a specific set of bipartite graphs, namely, $\graphfamily_n$.
We conjecture that $\graphfamily_n$ is isomorphic to the set of all bipartite graphs with maximum cardinality matchings of size $n$.
Despite it being trivial, it was impressive that the part of the proof of this lemma which pertains to arithmetic manipulation was almost completely automated using Eberl's tool~\cite{manuelLimitsISSAC}.
The other part of the proof was to show that $\graphfamily_n$ is finite, which was tedious.

The more interesting part would be to show that $\compratio_n$ converges.
In BM, they do not prove that, yet they do not have it as an assumption in their theorem statement.
One way to show that this assumption holds is to use the theorem by KVV showing that no online algorithm for bipartite matching has a better competitive ratio that $1-1/e$.
However, formalising that theorem is beyond the scope of our project.


\section{Discussion}

KVV's paper on online bipartite matching was a seminal result in the theory of online algorithms and matching.
Its interesting theoretical properties, together with the emergence of online matching markets have inspired a lot of generalisations to other settings, e.g.\ for weighted vertices~\cite{onlineVertexWeightedMatching}, online bipartite b-matching~\cite{onlinebmatching}, the AdWords market~\cite{AdWords2007}, which models the multi-billion dollars industry online advertising industry, and general graphs~\cite{onlineMatchingGeneralGraphs}, which models applications like ride-sharing.
All of this means an improved understanding of the theory of online-matching, and especially RANKING, is of great interest.

Indeed, as stated earlier, multiple authors studied the analysis of RANKING.
We mention here the most relevant five approaches:
\begin{enumerate*}
    \item Goel and A. Mehta~\cite{GoelMehtaOnlineMatching2008}, tried to simplify the proof and fill in a ``hole'' in KVV's original proof, in particular in the proof of Lemma~6 in KVV's original paper, 
    \item Birnbaum and C. Mathieu~\cite{onlineMatchingSimple2008} also provided a simple, primarily combinatorial, proof for RANKING, 
    \item Devanur, Jain, and Kleinberg~\cite{devanurRandomizedPrimalDualAnalysis2013} whose main contribution was to model the algorithm as a primal-dual algorithm, in an attempt to unify the approaches for analysing the unweighted, vertex-weighted, and the AdWords problem,
    \item Eden, Feldman, Fiat, and Segal~\cite{onlineMatchingEcon}, who tried to simplify the proof by using approaches from theory of economics, and finally
    \item Vazirani~\cite{vaziraniOnlineMatching2022}, who tried to simplify the proof of RANKING, in an attempt to use RANKING, or a generalisation of it, to solve AdWords.
\end{enumerate*}
However, despite all of these attempts, the proof of RANKING's correctness is still considered difficult to understand, e.g.\ Vazirani's latest trial to
generalize it had a critical non-obvious flaw in the combinatorial part of the analysis~\cite{vaziraniOnlineMatching2022}, which took months of reviewing to find out.

We believe this formalisation serves two purposes.
First, it is yet another attempt to further the understanding of this algorithm's analysis.
From that perspective, our work achieved two things. \begin{enumerate*}\item It further clarified the complexity of the combinatorial argument underlying the analysis of this algorithm by providing a detailed proof for how one could generalise the competitiveness of the algorithm from bipartite graphs with perfect matchings to general bipartite graphs.
We note that this part of the analysis is analogous to the ``no-surpassing
property'' in Vazirani's work~\cite{vaziraniOnlineMatching2022}, which is where his attempt to generalise RANKING to AdWords fell apart, further confirming our findings regarding the complexity of this part of the analysis.
\item We significantly simplified the analysis of the consequences of changing the ranking of an offline vertex.
\end{enumerate*}

Another outcome of this project is interesting from a formalisation perspective.
It further confirmed the previously reported observation that it is particularly hard to formalise graphical or geometric arguments and concepts.
E.g.\ verbally, let alone formally, encoding the intuition behind \shiftsto, which is a primarily graphical concept, is extremely cumbersome.
We hypothesise that this is an inherent complexity in graphical concepts and arguments which manifests itself when the graphical argument is put into prose.




One point which we believe would particularly benefit from further study is that of modelling online computation.
In its full generality, online computation is computation where the algorithm has access only to parts of the input, which arrive serially, but not the whole input.
The way we model our algorithm is ad-hoc and does not capture that essence of online computation in its full generality.
It remains an interesting question how can one model online computation, more generally.
In addition to the theoretical interest, a satisfactory answer to that question is essential if one is to show that the competitive ratio of RANKING is optimal for online algorithms, which is a main result of KVV.



\bibliography{long_paper}
\pagebreak
\appendix
\section*{Appendix: Isabelle/HOL Listings}
\begin{figure*}[h]
\begin{IsabelleSnippet}[label=isa:graph_matching]{The formalisation of graphs and matching we use in Isabelle/HOL}
locale graph_defs =
  fixes E :: "'a set set"

abbreviation "graph_invar E \<equiv> (\<forall>e\<in>E. \<exists>u v. e = {u, v} \<and> u \<noteq> v) \<and> finite (Vs E)"

locale graph_abs =
  graph_defs +
  assumes graph: "graph_invar E"

definition matching where
  "matching M \<longleftrightarrow> (\<forall>e1 \<in> M. \<forall>e2 \<in> M. e1 \<noteq> e2 \<longrightarrow> e1 \<inter> e2 = {})"

\end{IsabelleSnippet}
\end{figure*}

\begin{figure*}[t]
\begin{IsabelleSnippet}[label=isa:do_programs]{Formalising the different distributions we need in our proof.}
abbreviation rank_matched :: "nat \<Rightarrow> bool pmf" where
  "rank_matched t \<equiv>
    do {
      \<sigma> \<leftarrow> pmf_of_set (permutations_of_set V);
      let R = online_match G \<pi> \<sigma>;
      return_pmf (\<sigma> ! t \<in> Vs R)
    }"

definition matched_before :: "nat \<Rightarrow> bool pmf" where
  "matched_before t \<equiv>
    do {
      \<sigma> \<leftarrow> pmf_of_set (permutations_of_set V);
      v \<leftarrow> pmf_of_set V;
      let R = online_match G \<pi> \<sigma>; 
      let u = (THE u. {u,v} \<in> M);
      return_pmf (u \<in> Vs R \<and> index \<sigma> (THE v. {u,v} \<in> R) \<le> t)
    }"

lemma matched_before_uniform_u: "matched_before t = do
    {
      \<sigma> \<leftarrow> pmf_of_set (permutations_of_set V);
      u \<leftarrow> pmf_of_set (set \<pi>);
      let R = online_match G \<pi> \<sigma>;
      return_pmf (u \<in> Vs R \<and> index \<sigma> (THE v. {u,v} \<in> R) \<le> t)
    }"

abbreviation "matched_before_t_set t \<equiv> 
  do {
    \<sigma> \<leftarrow> pmf_of_set (permutations_of_set V);
    let R = online_match G \<pi> \<sigma>;
    return_pmf {u \<in> set \<pi>. u \<in> Vs R \<and> index \<sigma> (THE v. {u,v} \<in> R) \<le> t}
  }"
\end{IsabelleSnippet}
\caption*{$\bernoulli_t$ is formalised as \texttt{rank\_matched}, $\bernoulli_t''$ is formalised as \texttt{matched\_before}, \texttt{matched\_before\_uniform\_u} is the formal statement showing that the distribution of a randomly chosen online vertex is matched to an offline vertex of rank at most $t$ is the same as $\bernoulli_t''$, and $\bernoulli_t^*$ is formalised as \texttt{matched\_before\_t\_set}.}
\end{figure*}

\begin{figure*}[h]
\begin{IsabelleSnippet}[label=isa:paths]{The definitions of paths and augmenting paths and Berge's lemma as formalised in Isabelle/HOL.}
context fixes G :: "'a set set" begin
inductive path where
  path0: "path []" |
  path1: "v \<in> Vs G \<Longrightarrow> path [v]" |
  path2: "{v,v'} \<in> G \<Longrightarrow> path (v'#vs) \<Longrightarrow> path (v#v'#vs)"
end

inductive alt_list where
"alt_list P1 P2 []" |
"P1 x \<Longrightarrow> alt_list P2 P1 l \<Longrightarrow> alt_list P1 P2 (x#l)"

definition augmenting_path where
  "augmenting_path M p \<equiv>
    alt_list (\<lambda>e. e \<notin> M) (\<lambda>e. e \<in> M) (edges_of_path p)
    \<and> (length p \<ge> 2) \<and> hd p \<notin> Vs M \<and> last p \<notin> Vs M"

abbreviation "augpath E M p \<equiv> path E p \<and> distinct p \<and> augmenting_path M p"

lemma Berge_1:
  assumes finite: "finite M" "finite M'" and
    matchings: "matching M" "matching M'" and
    lt_matching: "card M < card M'" and
    doubleton_neq_edges: "\<forall>e\<in>(M \<oplus> M').\<exists>u v. e = {u,v} \<and> u \<noteq> v" "\<forall>e\<in>M. \<exists>u v. e = {u, v} \<and> u \<noteq> v"
  shows "\<exists>p. augmenting_path M p \<and> path (M \<oplus> M') p \<and> distinct p"
\end{IsabelleSnippet}
\end{figure*}

\begin{figure*}[t]
\begin{IsabelleSnippet}[label=isa:ranktmatchedbound]{Formal statement of Lemma~\ref{lem:ranktmatchedbound}.}
lemma rank_t_unmatched_prob_bound: 
  "t < card V \<Longrightarrow> 
     1 - measure_pmf.prob (rank_matched t) {True} \<le> 
        1 / (card V) * (\<Sum>s\<le>t. measure_pmf.prob (rank_matched s) {True})"
\end{IsabelleSnippet}
\end{figure*}

\begin{figure*}[t]
\begin{IsabelleSnippet}[label=isa:ranktmatchedbound]{Formal statement of Lemma~\ref{lem:remove_verts_zig_zag}.}
lemma
  assumes "X \<subseteq> set \<pi>"
  assumes "bipartite M (set \<pi>) (set \<sigma>)"
  assumes "matching M"
  shows 
   remove_online_vertices_zig_zig_eq: 
     "v \<in> set \<sigma> \<Longrightarrow> 
        \<forall>x \<in> X. ((\<exists>v'. {x,v'} \<in> M) \<longrightarrow> index \<sigma> (THE v'. {x,v'} \<in> M) < index \<sigma> v) \<Longrightarrow> 
                  zig (G \<setminus> X) (M \<setminus> X) v \<pi> \<sigma> = zig G M v \<pi> \<sigma>" and
   remove_online_vertices_zag_zag_eq:
     "u \<in> set \<pi> \<Longrightarrow>
     ((\<exists>v. {u,v} \<in> M \<Longrightarrow> 
           \<forall>x \<in> X. ((\<exists>v. {x,v} \<in> M) \<longrightarrow> 
                  index \<sigma> (THE v. {x,v} \<in> M) < index \<sigma> (THE v. {u,v} \<in> M)))) \<Longrightarrow>
         zag (G \<setminus> X) (M \<setminus> X) u \<pi> \<sigma> = zag G M u \<pi> \<sigma>"
\end{IsabelleSnippet}
\end{figure*}

\begin{figure*}[t]
\begin{IsabelleSnippet}[label=isa:zigsymm]{Formal statement of Lemma~\ref{lem:zigsymm}.}
lemma\<^marker>\<open>tag important\<close> ranking_matching_zig_zag_eq:
  assumes "{u,x} \<in> M"
  assumes "x \<in> set \<sigma>"
  assumes "ranking_matching G M \<pi> \<sigma>"
  assumes "ranking_matching (G \<setminus> {x}) M' \<sigma> \<pi>"
  shows "zig (G \<setminus> {x}) M' u \<sigma> \<pi> = zag G M u \<pi> \<sigma>"
\end{IsabelleSnippet}
\end{figure*}

\begin{figure*}[t]
\begin{IsabelleSnippet}[label=isa:lemma2abs]{Formal statement of Lemma~\ref{lem:lemma2abs}.}
lemma remove_offline_vertex_diff_is_zig:
  assumes "ranking_matching G M \<pi> \<sigma>"
  assumes "ranking_matching (G \<setminus> {x}) M' \<pi> \<sigma>"
  assumes "x \<in> set \<sigma>"
  shows "M \<oplus> M' = set (edges_of_path (zig G M x \<pi> \<sigma>))"
\end{IsabelleSnippet}
\end{figure*}

\begin{figure*}[t]
\begin{IsabelleSnippet}[label=isa:removevtxcardmatching]{Formal statement of Lemma~\ref{lem:removevtxcardmatching}.}
lemma ranking_matching_card_leq_on_perfect_matching_graph:
  assumes "ranking_matching G M \<pi> \<sigma>" "ranking_matching (make_perfect_matching G N) M' \<pi> \<sigma>"
  shows "card M' \<le> card M"
\end{IsabelleSnippet}
\end{figure*}

\begin{figure*}[h]
\begin{IsabelleSnippet}[label=isa:rankcorrectgeneral]{The formalisation of Theorem~\ref{thm:rankcorrectgeneral}.}
lemma comp_ratio_no_limit: 
  "measure_pmf.expectation ranking_prob card / (card V) \<ge> 1 - (1 - 1/(card V + 1)) ^ (card V)" 
\end{IsabelleSnippet}
\end{figure*}

\end{document}